\newtheorem{theorem}{Theorem}
\newtheorem{definition}{Definition}
\newtheorem{lemma}{Lemma}
\newtheorem{proposition}{Proposition}
\newenvironment{proof}[1][Proof]{\textbf{#1.} }{\ \rule{0.5em}{0.5em}}
\begin{document}

\title{\textbf{ Failure of Smooth Pasting Principle and Nonexistence of Equilibrium Stopping Rules under Time-Inconsistency}
}

\thispagestyle{empty}
\title{Failure of Smooth Pasting Principle and Nonexistence of Equilibrium Stopping Rules under Time Inconsistency\thanks{Zhou gratefully acknowledges financial supports through a start-up grant at Columbia University and through the Nie Center for Intelligent Asset Management. The authors thank the associate editor and two anonymous referees for their detailed and constructive comments that have led to a much improved version.}}
\author{Ken Seng Tan\and Wei Wei\and Xun Yu Zhou %
\thanks{Tan: Department of Statistics and Actuarial Science
University of Waterloo
Mathematics 3, 200 University Avenue West
Waterloo, Ontario, Canada N2L 3G1. E-mail: kstan@uwaterloo.ca.  %
Wei: Department of Statistics and Actuarial Science, University of Waterloo, 200 University Avenue West, Waterloo, N2L 3G1, ON, Canada. E-mail: wei.wei@uwaterloo.ca. 
Zhou: Department of IEOR, Columbia University, New York, NY 10027, USA. Email: xz2574@columbia.edu.
}
}
%
%
\thispagestyle{empty}
\date{\today}
\maketitle
\thispagestyle{empty}

\vspace*{-0.4cm}
\begin{abstract}
This paper considers time-inconsistent stopping problems in which the inconsistency arises from a class of non-exponential discount functions called  the weighted discount functions.  We show that 
the smooth pasting principle, the main approach  that is used to construct explicit solutions for the classical  time-consistent optimal stopping problems,
may fail under time-inconsistency. Specifically, we prove that the smooth pasting solves a time-inconsistent problem, within the intra-personal game theoretic framework with a general nonlinear cost functional and a geometric Brownian motion,  if and only if certain inequalities on the model primitives are satisfied. In the special case of a real option problem, we show that the violation of these inequalities can happen even for very simple non-exponential discount functions. Moreover, we show that the real option problem actually does not admit any equilibrium whenever the smooth pasting approach fails. The negative results in this paper 
caution blindly extending the classical approach for time-consistent
stopping problems to their time-inconsistent counterparts.

\end{abstract}

\noindent {\it Key words}\/: optimal stopping, weighted discount function, time inconsistency, equilibrium stopping, intra-personal game, smooth pasting, real option.

\newpage
\setcounter{page}{1}
\section{Introduction}\label{SecIntroduction}
A crucial assumption imposed on classical optimal stopping models is that an agent has a constant time preference rate and hence discounts her future payoff exponentially. When this assumption is violated, an optimal stopping problem becomes generally time-inconsistent in that any optimal stopping rule obtained today may no longer be optimal from the perspective of a  future date.
The problem then becomes largely descriptive rather than normative because there is generally no {\it dynamically} optimal solution that can be used to guide the agent's decisions. Different agents may react differently to a same time-inconsistent problem, and a goal of the study is to {\it describe} the different behaviors.
\cite{str1955} is the first to observe that non-constant time preference rates result in time-inconsistency, and to categorize three types of agents when facing such time-inconsistency. One of the types is called a ``non-committed, sophisticated agent" who, at {\it any} given time, optimizes the underlying objective taking as constraints the stopping decisions chosen by her future selves. Such a problem
has been  formulated within an intra-personal game theoretic framework and the corresponding equilibria are used to describe the behaviors of this type of agents;  see, for example, \cite{phepol1968}; \cite{lai1997}; \cite{odorab1999}; \cite{krusmi2003} and \cite{lutmar2003}. An extended dynamic programming equation
for continuous-time deterministic equilibrium controls is derived in \cite{ekelaz2006}, followed by a stochastic version in \cite{bjork2010general} and application to a mean--variance portfolio model in \cite{bjomurzho2014}.

\par

This paper studies a time-inconsistent stopping problem in continuous time within the intra-personal game framework, in which the source of time-inconsistency is the so-called weighted discount function (WDF), a very general class of non-exponential discount functions.\footnote{The WDF, proposed in \cite{ebert2016weighted},  is a weighted average of a set of exponential discount functions. It has been shown in \cite{ebert2016weighted} that it can be used to model the time preference of a group of individuals  as well as that of behavioral agents, and that most commonly used non-exponential discount functions
are WDFs.} We  make two main contributions. First, we demonstrate that the smooth pasting (SP) principle, which is almost the {\it exclusive} approach in solving classical optimal stopping problems, may fail when time-consistency is lost.
Second,  for a stopping model whose time-consistent counterpart is the
well-studied real option problem, we establish a condition under which no equilibrium
stopping rule exists.
 These results are constructive and they caution blindly extending the SP principle to time-inconsistent stopping problems. 
\par
Let us now elaborate on the first contribution. Recall that the SP is used to  derive (often explicit) solutions to conventional, time-consistent optimal stopping problems.
It conjectures a candidate solution to
the underlying Bellman equation (or variational inequalities), which is
a free boundary PDE,  based on the $C^1$ smooth pasting around the free boundary, and then checks that it solves the PDE under some standard regularity/convexity conditions on the model
primitives. Finally it verifies that the first hitting time of the free boundary
indeed solves the optimal stopping problem using the standard verification technique. Recently, \cite{grewan2007} and \cite{hsiaw2013goal}, among others, extend the application of the SP principle to solving time-inconsistent stopping problems. While the SP {\it happened to} work
in the specific settings of these papers, it is more an exception than a rule.
Indeed, in the present  paper we show that, for a geometric Brownian motion with a nonlinear
cost functional,  while the SP  always yields a candidate solution, the latter actually gives rise to an
equilibrium stopping rule if and only if certainty inequalities on the model primitives are satisfied.
These inequalities hold trivially for the time-consistent exponential discount case, but  does not in general for its time-inconsistent non-exponential counterpart, {\it even if all the other parameters and assumptions (state dynamics, running cost, etc.) are identical}. Indeed, the violation of such inequalities  is not rare even in very simple cases.
For example, we show that in the  special case of a  real option problem with some WDFs including the pseudo-exponential discount function  (\citeauthor{ekelaz2006} \citeyear{ekelaz2006}; \citeauthor{kar2007} \citeyear{kar2007}; \citeauthor{harlai2013} \citeyear{harlai2013}), the inequalities do not hold for plausible sets of parameter values of the chosen discount functions.
%
%
%
The bottom line is that
one cannot blindly apply the  SP to any stopping model when time-inconsistency is present, even if the SP does work for its time-consistent counterpart. \par

The second contribution is on the nonexistence of an intra-personal equilibrium. For a time-consistent stopping problem, optimal stopping rules exist when the cost functional and the underlying process satisfy some mild regularity conditions (see, e.g., \citeauthor{peskir2006optimal} \citeyear{peskir2006optimal}). However, this is no longer the case for the time-inconsistent counterpart. To demonstrate this, we again take the   real option problem with a WDF. For such a problem, we prove that there simply does not exist any equilibrium stopping rule whenever the
aforementioned inequalities are violated and hence the SP principle fails. Our result therefore reveals that equilibrium stopping rules within the intra-personal game theoretic framework may not exist no matter what regularity conditions are imposed on the underlying models. \par

There are studies in the literature on time-inconsistent stopping including  nonexistence results, albeit in considerably different settings especially in terms of the source of  time-inconsistency  and the definition of an equilibrium.
\cite{bayraktar2018time}
consider a stopping problem with a discrete-time Markov chain, whereas  the time-inconsistency comes from the mean-variance objective functional.
The Markov chain takes value in a set of finite numbers, which allows them to discuss the nonexistence of equilibrium stopping rules by enumeration.
\cite{christensen2018time} and \cite{christensen2018finding}
study continuous-time  stopping problems where the time-inconsistency follows from the types of payoff functions (mean-variance or  endogenous habit formation). In particular, \cite{christensen2018time} show that the candidate solution derived from the SP may not lead to an equilibrium stopping for some range of parameters. However, the definition of equilibria  in these papers is entirely different from the one
based on the ``first-order" spike variation; the latter seems to be widely adopted by many papers (see, e.g.,  \citeauthor{odorab1999}  \citeyear{odorab1999};  \citeauthor{bjork2010general} \citeyear{bjork2010general}; \citeauthor{ekeland2012time} \citeyear{ekeland2012time}; and \citeauthor{bjomurzho2014} \citeyear{bjomurzho2014}) including the present one.\footnote{\cite{christensen2018time} also consider mixed strategies as opposed to the pure strategies studied in our paper and many other papers. Time-inconsistent problems using mixed strategies are interesting, and it is possible that no equilibrium may be found in the class of mixed strategies either. However,
the main point of this paper is to show that a change of discounting factor from exponential to
non-exponential  may cause a stopping problem that has an equilibrium to one that does not, {\it even though both are using pure strategies}.}
\cite{huang2018time} investigate a continuous-time stopping problem with non-exponential discount functions. They define an equilibrium via a fixed point of
a mapping, which is essentially based on a zeroth-order condition and hence  is different from our definition. Under their setting, immediately  stopping is always a (trivial) equilibrium (so there is no issue of nonexistence), which is not the case according to  our definition.

The remainder  of the paper is organized as follows.
In Section \ref{GeneralModel}, we  recall the definition and some important properties of the
WDF introduced by  \cite{ebert2016weighted}, formulate a general time-inconsistent stopping problem within the intra-personal game theoretic framework, and characterize the equilibrium stopping rules by a Bellman system and provide the verification theorem. In Section \ref{Sec:SPGeneral} we consider the case when the state process is a geometric Brownian motion, apply the SP principle to derive a candidate solution  and establish certain equivalent conditions for the derived candidate solution to actually solve the Bellman system. Then we present a real option problem, in which the aforementioned equivalent conditions reduce to a single inequality, failing which there is simply no equilibrium at all. Finally, Section \ref{Conclusion} concludes the paper. Appendix A contains proofs of some results.

\section{The Model }\label{GeneralModel}
\subsection{Time preferences}

Throughout this paper we consider weighted discount functions  defined as follows.

\begin{definition}[\citeauthor{ebert2016weighted} \citeyear{ebert2016weighted}]\label{WeightedDiscounFunction}
Let $h: [0,\infty) \to (0,1]$ be strictly decreasing with $h(0)=1.$ We call $h$ a weighted discount function (WDF) if there exists a distribution function $F$ concentrated on $[0,\infty)$ such that
\begin{align}\label{weightedFormula}
h(t) = \int^{\infty}_0e^{-rt}dF(r).
\end{align}
Moreover, we call $F$  the \textnormal{weighting distribution} of $h.$
\end{definition}
Many commonly used discount functions can be represented in weighted form. For example, exponential function $h(t) =e^{-rt},r>0$ (\citeauthor{sam1937} \citeyear{sam1937}) and pseudo-exponential function $h(t) = \delta e^{-rt}+(1-\delta)e^{-(r+\lambda)t},0<\delta <1, r>0,\lambda>0$ (\citeauthor{ekelaz2006} \citeyear{ekelaz2006}; \citeauthor{kar2007} \citeyear{kar2007}) are WDFs with degenerate and binary distributions respectively. A more complicated example is  the generalized hyperbolic discount function (\citeauthor{loewenstein1992anomalies} \citeyear{loewenstein1992anomalies}) with parameters $\gamma > 0, \beta > 0$, which  can be represented as
\begin{align}\label{ghd}
h(t)=\frac{1}{(1+\gamma t)^{\frac{\beta}{\gamma}}}\equiv \int^{\infty}_0e^{-rt}f\left(r;\frac{\beta}{\gamma},\gamma \right)dr
\end{align}
where $
f(r;k,\theta)=\frac{r^{k-1}e^{-\frac{r}{\theta}}}{\theta^k\Gamma(k)}
$
is the density function of the Gamma distribution with parameters $k$ and $\theta$,   and $\Gamma(k)=\int^{\infty}_0x^{k-1}e^{-x}dx$  the Gamma function evaluated at $k$. See \cite{ebert2016weighted} for more examples and
discussions about the types of discount functions that are of weighted form. \par
The following result is a restatement of the well-known Bernstein's theorem in terms of WDFs, which actually provides a characterization of the latter. 
\begin{theorem}[\citeauthor{ber1928} \citeyear{ber1928}]\label{the:Bernstein}
 A discount function $h$ is a WDF if and only if it is continuous on $[0,\infty),$ infinitely differentiable on $(0,\infty),$ and satisfies $(-1)^{n}h^{(n)}(t)\ge 0,$ for all non-negative integers $n$ and for all $t>0.$
\end{theorem}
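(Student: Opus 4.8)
The plan is to recognize this statement as the classical Bernstein (Hausdorff--Widder) characterization of completely monotone functions, restated under the normalization $h(0)=1$, and to prove the two implications separately. Throughout, write $h^{(0)}=h$ and call the condition $(-1)^{n}h^{(n)}(t)\ge 0$ for all $n$ and all $t>0$ \emph{complete monotonicity}. I would first dispose of the ``only if'' direction, which is routine. Assuming the representation $h(t)=\int_0^\infty e^{-rt}\,dF(r)$ with $F$ a distribution on $[0,\infty)$, continuity on $[0,\infty)$ follows from dominated convergence (the integrand is bounded by $1$ and $h(0)=\int_0^\infty dF=1$), and smoothness on $(0,\infty)$ follows by differentiating under the integral sign: on any $t\ge t_0>0$ each factor $r^{n}e^{-rt}$ is dominated by the $dF$-integrable function $\sup_{r}r^{n}e^{-rt_0}$, so $h^{(n)}(t)=\int_0^\infty(-r)^{n}e^{-rt}\,dF(r)$ and hence $(-1)^{n}h^{(n)}(t)=\int_0^\infty r^{n}e^{-rt}\,dF(r)\ge 0$.

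The substance is the ``if'' direction, which is Bernstein's theorem itself. The route I would take is to reconstruct the weighting distribution directly from the high-order derivatives of $h$ through a Post--Widder type inversion. Concretely, I would build candidate distribution functions $F_k$ whose increments are proportional to $\tfrac{(-1)^{k}}{k!}\lambda^{k+1}h^{(k)}(\lambda)$ evaluated at the scaled arguments $\lambda=k/r$; complete monotonicity makes each increment nonnegative, so every $F_k$ is a nondecreasing function, and the total mass is controlled by $h(0^+)=1$, making them sub-probability distributions. A Helly selection argument then extracts a weakly convergent subsequence with limit $F$, after which one checks that the Laplace transform of $F$ reproduces $h$; uniqueness of the Laplace transform forces the whole family to converge and identifies $F$ as the weighting distribution. (An alternative I would keep in reserve is the discrete route: for fixed step $\delta$ the sequence $h(n\delta)$ is a completely monotone sequence, hence a Hausdorff moment sequence $\int_0^1 x^{n}\,d\beta(x)$; the substitution $x=e^{-r\delta}$ gives an exponential representation on a grid, and letting $\delta\to 0$ with a tightness argument recovers the continuous form. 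Choquet theory gives a third route, since the extreme points of the convex set of normalized completely monotone functions are exactly the exponentials $e^{-rt}$.)

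The main obstacle is tightness: one must rule out mass escaping to $r=\infty$ in the limit, which is precisely where the discount-function hypotheses enter. The normalization $h(0)=1$ together with continuity of $h$ at $0$ pins the total mass at $1$ and guarantees $F(\infty)=1$, so that $F$ is a genuine probability distribution function rather than a defective one; without continuity at $0$ one could only conclude a representation on $(0,\infty)$ with possibly deficient mass. Strict monotonicity of $h$ plays only a minor role here, excluding the degenerate case $F=\delta_0$ (which would give $h\equiv 1$). Once tightness and the reproduction of $h$ are established, the identity $h(0)=\int_0^\infty dF=1$ confirms that $F$ is a bona fide distribution function concentrated on $[0,\infty)$, completing the identification of $h$ as a WDF in the sense of Definition~\ref{WeightedDiscounFunction}.
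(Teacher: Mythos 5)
The paper does not prove this statement at all: it is quoted as the classical Bernstein--Hausdorff--Widder theorem, attributed to Bernstein (1928) and used as a black box to test whether given discount functions are WDFs, so there is no in-paper argument to compare yours against. Your outline is the standard textbook proof and is sound. The ``only if'' direction is correctly handled by differentiation under the integral (the domination by $\sup_{r}r^{n}e^{-rt_{0}}<\infty$ for $t\ge t_{0}>0$ is the right bound, and continuity at $0$ follows from monotone or dominated convergence). For the ``if'' direction, the Post--Widder inversion combined with Helly selection is exactly Widder's proof, and you correctly locate the two places where the discount-function hypotheses matter: continuity of $h$ at $0$ with $h(0)=1$ is what prevents mass from escaping to $r=\infty$ (a defective limit would manifest as a jump of $h$ at $t=0^{+}$), and strict decrease of $h$ only serves to exclude $F=\delta_{0}$, i.e.\ $h\equiv 1$, which satisfies the three analytic conditions but is not a WDF under Definition~\ref{WeightedDiscounFunction}. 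The one caveat is that your write-up is a proof plan rather than a proof: the nonnegativity of the $F_{k}$-increments from complete monotonicity is immediate, but the convergence of the Laplace transforms of the selected subsequence back to $h$ (and the uniqueness argument that upgrades subsequential to full convergence) is asserted rather than executed, and that is where the real analytic work of Bernstein's theorem lives. For the purposes of this paper, citing the result as the authors do, or filling in your sketch from Widder or Feller, would both be acceptable.
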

Bernstein's theorem can be used to examine if a given function is a WDF without necessarily representing it in the form of (\ref{weightedFormula}). For example,
it follows from this theorem that the constant sensitivity discount function $h(t) = e^{-at^k}, a,k>0,$ and the constant absolute decreasing impatience   discount function $h(t)=e^{e^{-ct}-1},c>0$, are both WDFs.

\subsection{Stopping rules and equilibria}
On a complete filtered probability space $(\Omega,\mathcal{F},\mathbb{F}=\{\mathcal{F}_t\}_{t\ge 0},P)$ there lives a one dimensional Brownian motion $W$, and
a family of Markov diffusion processes $X=X^x$ parameterized by the initial state $X_0=x\in\mathbb{R}$ and governed by the following stochastic differential equation (SDE)
\begin{align}\label{SDE}
dX_t = b(X_t)dt+\sigma(X_t)W_t,\;\;X_0=x,
\end{align}
where $b, \sigma$ are Lipschitz continuous functions, i.e., there exists an $L>0$ such that for any $x\neq y$
\begin{align}\label{Lip}
\vert b(x) - b(y)\vert + \vert\sigma(x)-\sigma(y)\vert\le L\vert x - y\vert.
\end{align}
We assume that $\mathbb{F}$  is the $P$-augmentation of the natural filtration generated by $X$. To avoid an uninteresting case we also assume that $|\sigma(x)|\geq c>0$ $\forall x\in\mathbb{R}$ so that $X$ is non-degenerate.\footnote{Here we assume that the Brownian motion is one dimensional just for notational simplicity. There is no essential difficulty with a multi-dimensional Brownian motion.}

For any fixed $x\in\mathbb{R}$, an agent monitors the process $X=X^x$ and aims to minimize the following cost functional
\begin{align}\label{StationaryCostFunctionalGeneral}
J(x;\tau) = \mathbb{E}\left[\int^{\tau}_0h(s)f(X_s)ds+h(\tau)g(X_\tau)\Big\vert X_0=x\right]
\end{align}
by choosing $\tau\in \mathcal{T}$, the set of all  $\mathbb{F}$-stopping times.  Here
$h$ is a WDF with a weighting distribution $F$, $g$ is continuous and bounded, and $f$ is continuous  with polynomial growth, i.e., there exists $m\ge 1$ and $ C>0$ such that
\begin{align}\label{GrowthPolynomial}
\vert f(x)\vert\le C(\vert x\vert^m+1).
\end{align}
Moreover, we assume  that there exists $n\ge 1, C(r)>0$ satisfying
$\int^{\infty} _0C(r)dF(r)+\int^{\infty}_0rC(r)dF(r)<\infty$ such that
\begin{align}\label{GrowthPolynoimalCostFunctional}
\sup_{\tau\in\mathcal{T}}\mathbb{E}\left[\int^{\tau}_0e^{-rs}\vert f(X_s) \vert ds+e^{-r\tau}\vert g(X_{\tau})\vert \Big\vert X_0=x\right]\le C(r)(\left\vert x \right\vert^n+1 ),
\;\;\forall r\in \text{supp}(F).
\end{align}
This is a (weak) assumption to ensure that the optimal value of the stopping problem is finite, and hence the problem is well-posed.

We now define stopping rules which are essentially binary feedback controls. These stopping rules induce Markovian stopping times for any given Markov process.

\begin{definition}[Stopping rule]\label{StoppingRuleNon}
A {stopping rule} is a measurable function $u:$ $\mathbb{R}\to \{0,1\}$ where $0$ indicates ``continue'' and $1$ indicates ``stop''. For any given Markov
process $X=\{X_t\}_{t\ge 0}$, a stopping rule $u$ defines a Markovian  stopping time
\begin{equation}\label{StoppingTimeNon}
\tau_u=\inf\{t\geq 0,u(X_t)=1\}.
\end{equation}
%
\end{definition}

Given a stopping rule $u$, we can define the stopping region
$\mathcal{S}_u=\{x\in(0,\infty): u(x)=1\}$.
For any $x\in\bar{\mathcal{S}_u}$,
since the underlying process $X$ is non-degenerate, a standard result (e.g., Chapter $3$ of \citeauthor{itoandh1965diffusion} \citeyear{itoandh1965diffusion}) yields that $\mathbb{P}(\tau_u=0\vert X_0=x)=1,$ and hence $J(x;\tau_u) = g(x).$ This means that
the agent stops immediately once the process reaches at any point  in
$\bar{\mathcal{S}_u}$. As a result, in the setting of this paper,
the  continuation region is
$\mathcal{C}_u = \bar{\mathcal{S}_u}^c.$


As discussed earlier the  non-exponential discount function $h$ in the cost functional
(\ref{StationaryCostFunctionalGeneral}) renders the underlying optimal stopping problem
generally time-inconsistent. 
In this paper we consider a sophisticated and non-committed  agent who is aware of the time-inconsistency but unable to control her future actions.
In this case, she seeks to find the so-called equilibrium strategies within the intra-personal game theoretic framework, in which the individual is represented by different players at different dates.\footnote{Given the infiniteness of the time horizon, the stationarity of the process $X$ as well the time-homogeneity of the running objective function $f$, each self at any given time $t$ faces exactly the same decision problem as the others, which only depends on the current state
$X_t = x$, but not on time $t$ directly. We can thus identify self $t$ by the current state $X_t = x$. That is why we need to consider only
stationary stopping rules $u$, which are functions of the state variable $x$ only. For details on this convention, see, e.g. \citeauthor{grewan2007} (\citeyear{grewan2007}); \citeauthor{ekeland2012time} (\citeyear{ekeland2012time}); \citeauthor{harlai2013} (\citeyear{harlai2013}) and, in particular, Section 3.2 of \citeauthor{ebert2016weighted} (\citeyear{ebert2016weighted}).}\par

We now give the precise definition of an equilibrium stopping rule $\hat{u}$, which essentially entails a solution to a game in which
 no self at any time (or, equivalently in the current setting, at any state) is willing to deviate from $\hat{u}$.
\begin{definition}[Equilibrium stopping rule]\label{Def:EquilibriumStopping}
The stopping rule $\hat{u}$ is an equilibrium stopping rule if
\begin{equation}\label{StationaryDefinitionInequalityNon}
\limsup_{\epsilon\rightarrow 0+}\frac{J(x;\tau_{\hat{u}})-J(x;\tau^{\epsilon,a})}{\epsilon}\le 0,\;\;\forall x\in \mathbb{R},\; \forall  a\in\{0,1\},
\end{equation}
where
\begin{equation}\label{uAuxiliaryStationaryNon}
\tau^{\epsilon,a} =\left\{
  \begin{array}{ccr}
    \inf\{t\geq\epsilon ,\hat u(X_t)=1\} & \quad \text{if $a=0$},\\
    0 & \quad \text{if $a=1$}
  \end{array}
\right.
\end{equation}
with $\{X_t\}_{t\ge 0}$ being the solution to (\ref{SDE}).
\end{definition}

This definition of an equilibrium is consistent with the majority of definition  for time-inconsistent control problems in the literature (see, e.g.,  \citeauthor{bjork2010general} \citeyear{bjork2010general}; \citeauthor{ekeland2012time} \citeyear{ekeland2012time}; and \citeauthor{bjomurzho2014} \citeyear{bjomurzho2014}) when  a stopping rule is interpreted as a binary control. Indeed, $\tau^{\epsilon,a}$ is a stopping time that might be different from $\tau_{\hat u}$
only in the very small initial time interval $[0,\epsilon)$; hence it is a ``pertubation" of the latter.
%
%
\subsection{Equilibrium characterization}
The following result,
Theorem \ref{EquilibriumcharacterizationNon}, formally establishes the Bellman system and provides the verification theorem for verifying equilibrium stoppings.
\begin{theorem}[Equilibrium characterization]\label{EquilibriumcharacterizationNon}
Consider the cost functional (\ref{Costfunctional}) with WDF $h(t)=\displaystyle\int^{\infty}_0e^{-rt}dF(r),$ a stopping rule $\hat{u},$ an underlying process $X$ defined by (\ref{SDE}), functions $w(x;r)=\mathbb{E}\left[\displaystyle\int^{\tau_{\hat{u}}}_0e^{-rt}f(X_t)dt+e^{-r\tau_{\hat{u}}}g(X_{\tau_{\hat{u}}})\Big\vert X_0=x\right]$ and $V(x)=\displaystyle\int_0^{\infty}w(x;r)dF(r).$ %
Suppose that $w$ is continuous in $x$ and $V$ is continuously differentiable with its first-order derivative being absolutely continuous. If  $(V,w,\hat{u})$ solves
\begin{align}
&\min\left\{\frac{1}{2}\sigma^2(x)V_{xx}(x)+b(x)V_x(x)+f(x)-\int_0^{\infty}rw(x;r)dF(r),
g(x)-V(x)\right\}=0,\;x\in \mathbb{R}, \label{StationaryPDENon} \\
&\hat{u}(x) =\left\{
  \begin{array}{rcr}
    1 & \quad \text{if $V(x)=g(x)$},\\
    0 & \quad \text{otherwise},
  \end{array}
\right.\; x\in \mathbb{R},\label{StationaryuhatContinuousNon}
\end{align}
then $\hat{u}$ is an equilibrium stopping rule and $V$ is the value function of the problem, i.e., $V(x)=J(x;\tau_{\hat{u}})\;\forall x\in \mathbb{R}.$
\end{theorem}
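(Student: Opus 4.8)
The plan is to split the statement into two independent claims—the identity $V(x)=J(x;\tau_{\hat u})$ and the equilibrium inequality (\ref{StationaryDefinitionInequalityNon})—and to use the weighted representation (\ref{weightedFormula}) to reduce the non-exponential (and hence time-inconsistent) computation to a one-parameter family of exponential ones. The identity is immediate: substituting $h(s)=\int_0^\infty e^{-rs}\,dF(r)$ into the cost functional (\ref{StationaryCostFunctionalGeneral}) and applying Tonelli's theorem, justified by the integrability bound (\ref{GrowthPolynoimalCostFunctional}), gives $J(x;\tau_{\hat u})=\int_0^\infty w(x;r)\,dF(r)=V(x)$. All the substantive work is therefore in verifying (\ref{StationaryDefinitionInequalityNon}) for $a=1$ and $a=0$ separately.

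The case $a=1$ is essentially free: since $\tau^{\epsilon,1}\equiv 0$ in (\ref{uAuxiliaryStationaryNon}) and $h(0)=1$, we have $J(x;\tau^{\epsilon,1})=g(x)$, so the difference quotient equals $\bigl(V(x)-g(x)\bigr)/\epsilon$, and the second branch of the Bellman system (\ref{StationaryPDENon}) forces $V\le g$ pointwise, making this nonpositive for every $\epsilon>0$. For $a=0$ I would first handle $\tau^{\epsilon,0}=\inf\{t\ge\epsilon:\hat u(X_t)=1\}$ via the Markov property at the deterministic time $\epsilon$: for each fixed $r$, exponential discounting lets me write
\begin{align*}
\mathbb{E}\Bigl[\int_0^{\tau^{\epsilon,0}}e^{-rs}f(X_s)\,ds+e^{-r\tau^{\epsilon,0}}g(X_{\tau^{\epsilon,0}})\Bigr]=\mathbb{E}\Bigl[\int_0^{\epsilon}e^{-rs}f(X_s)\,ds\Bigr]+e^{-r\epsilon}\,\mathbb{E}\bigl[w(X_\epsilon;r)\bigr],
\end{align*}
because from the state $X_\epsilon$ the rule $\hat u$ is applied verbatim. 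Integrating against $dF(r)$ and using Tonelli yields a closed expression for $J(x;\tau^{\epsilon,0})$.

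The crux is the limit of $\bigl(V(x)-J(x;\tau^{\epsilon,0})\bigr)/\epsilon$, and the key idea is to reassemble the family $\{w(\cdot;r)\}$ into $V$ \emph{before} differentiating, so that I exploit only the assumed regularity of the aggregate $V$ and never that of the individual $w(\cdot;r)$, which is merely continuous. Writing $w(x;r)-e^{-r\epsilon}\mathbb{E}[w(X_\epsilon;r)]=\bigl(w(x;r)-\mathbb{E}[w(X_\epsilon;r)]\bigr)+\bigl(1-e^{-r\epsilon}\bigr)\mathbb{E}[w(X_\epsilon;r)]$ and integrating the first bracket against $dF(r)$ collapses it, by Tonelli, to $V(x)-\mathbb{E}[V(X_\epsilon)]$, giving
\begin{align*}
\frac{V(x)-J(x;\tau^{\epsilon,0})}{\epsilon}=\underbrace{\frac{V(x)-\mathbb{E}[V(X_\epsilon)]}{\epsilon}}_{T_1}+\underbrace{\int_0^\infty\frac{1-e^{-r\epsilon}}{\epsilon}\mathbb{E}[w(X_\epsilon;r)]\,dF(r)}_{T_2}-\underbrace{\frac{1}{\epsilon}\mathbb{E}\Bigl[\int_0^{\epsilon}h(s)f(X_s)\,ds\Bigr]}_{T_3}.
\end{align*}
As $\epsilon\to0+$, dominated convergence (envelope $\tfrac{1-e^{-r\epsilon}}{\epsilon}\le r$ and the moment bound (\ref{GrowthPolynoimalCostFunctional})) gives $T_2\to\int_0^\infty r\,w(x;r)\,dF(r)$, while continuity of $f$ and $h(0)=1$ give $T_3\to f(x)$; thus $T_2-T_3\to\psi(x):=\int_0^\infty r\,w(x;r)\,dF(r)-f(x)$, a continuous function.

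The main obstacle is $T_1$: because $V$ is only $C^1$ with $V_x$ absolutely continuous, $V_{xx}$ exists merely almost everywhere and the classical Dynkin formula is unavailable. I would instead invoke the generalized It\^o--Tanaka--Meyer formula—legitimate because the non-degeneracy $|\sigma|\ge c>0$ and the occupation-time formula guarantee that $X$ charges no Lebesgue-null set—to obtain the exact identity $\mathbb{E}[V(X_\epsilon)]-V(x)=\int_0^{\epsilon}\mathbb{E}\bigl[(\tfrac12\sigma^2 V_{xx}+bV_x)(X_s)\bigr]\,ds$, with $V_{xx}$ the a.e.\ Radon--Nikodym derivative of $V_x$. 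Rather than force a pointwise second derivative at the boundary of the stopping region (where it may not exist), I would use the first branch of (\ref{StationaryPDENon}), namely $\tfrac12\sigma^2 V_{xx}+bV_x\ge\psi$ a.e., to bound $T_1\le-\tfrac1\epsilon\int_0^{\epsilon}\mathbb{E}[\psi(X_s)]\,ds\to-\psi(x)$ by continuity of $\psi$. Combining the three pieces, $\limsup_{\epsilon\to0+}\bigl(V(x)-J(x;\tau^{\epsilon,0})\bigr)/\epsilon\le-\psi(x)+\psi(x)=0$ for every $x$, which together with the $a=1$ case establishes (\ref{StationaryDefinitionInequalityNon}). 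The remaining routine points—each Tonelli interchange and the dominated-convergence envelopes—are supplied by (\ref{GrowthPolynomial})--(\ref{GrowthPolynoimalCostFunctional}) and standard moment estimates for $X$.
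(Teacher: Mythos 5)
Your proposal is correct and follows essentially the same route as the paper: the same weighted-form/Markov-property decomposition of $J(x;\tau^{\epsilon,0})$ into the running-cost piece, the $(e^{-r\epsilon}-1)w(X_\epsilon;r)$ correction, and $\mathbb{E}[V(X_\epsilon)]$; the same use of the first branch of (\ref{StationaryPDENon}) via a generalized It\^o formula on the merely $C^1$ (with absolutely continuous derivative) function $V$; and the same dominated-convergence limits justified by (\ref{GrowthPolynomial})--(\ref{GrowthPolynoimalCostFunctional}). The only cosmetic difference is that the paper localizes with $\tau_n=\inf\{s\ge 0:\sigma(X_s)V_x(X_s)>n\}\wedge\epsilon$ before taking expectations in the It\^o step and then lets $n\to\infty$, whereas you assert the expectation identity directly; that martingale-integrability point is the one routine detail your ``standard moment estimates'' remark glosses over.
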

A proof to the above proposition is relegated to the appendix.\footnote{A proof of this result in a different setting was provided in \cite{ebert2016weighted}. Here we supply a proof for reader's convenience. }

\section{Failure of SP  and Nonexistence of Equilibrium}\label{Sec:SPGeneral}
In the classical literature on (time-consistent) stopping, optimal solutions are often obtained by the SP, because the candidate solution obtained from the SP  must solve the Bellman system (and hence the optimal stopping problem) under some mild conditions, such as the smoothness and convexity/concavity of the cost functions. 
In economics terms, the SP principle
amounts to the matching of the marginal cost at the stopped state; hence some economists apply the SP principle without even
explicitly introducing the Bellman system.
However, as we will show  in this section,  the SP approach in the presence of time-inconsistency may {\it not} yield a solution to the Bellman system (and therefore {\it not} to the stopping problem within the game theoretic framework), no matter how smooth and convex/concave the cost functions might be.\par

\subsection{A time-consistent benchmark}\label{tcc}

Let us start with a time-consistent optimal stopping problem which we use as a benchmark for comparison purpose and outline the way to use the SP principle in constructing explicit solutions. Consider the following classical optimal stopping problem
\begin{align}\label{OptimalStoppingConvention}
\inf_{\tau\in\mathcal{T}} \mathbb{E}\left[\int^{\tau}_0e^{-rs}f(X_s)ds+e^{-r\tau}K\Big\vert X_0=x \right],
\end{align}
where  the underlying process $X$ is a geometric Brownian motion
\begin{align}
dX_t = bX_tdt + \sigma X_tdW_t,\;x>0,
\end{align}
and
$\mathcal{T}$ is the set of all stopping times with respect to $\mathbb{F}$.\footnote{In this formulation the final cost is assumed to be
a constant lump sum $K$ without loss of generality. In fact, by properly modifying the running cost, we are able to reduce the stopping problem with a final  cost function $g$ to one with a  final  cost being any given constant $K>0$. To see this,
applying Ito's formula to $e^{-rt}(g(X_t)-K)$, we get
\begin{align*}
&\mathbb{E}\left[\int^{\tau}_0e^{-rs}f(X_s)ds+e^{-r\tau}g(X_{\tau})\Big\vert X_0=x \right]\\=&\mathbb{E}\left[\int^{\tau}_0e^{-rs}f(X_s)ds+e^{-r\tau}K+e^{-r\tau}(g(X_{\tau})
-K)\Big\vert X_0=x \right]\\=&\mathbb{E}\left\{\int^{\tau}_0e^{-rs}f(X_s)ds+e^{-r\tau}K
+\int^{\tau}_0e^{-rs}\left[\frac{1}{2}\sigma^2x^2g_{xx}(X_s)+bxg_x(X_s)-r(g(X_s)-K)\right]ds\Big\vert X_0=x \right\}.
\end{align*}
Letting $\tilde{f}(x):=f(x)+\frac{1}{2}\sigma^2x^2g_{xx}(x)+bxg_x(x)-r(g(x)-K)$, the cost functional now becomes the one in problem (\ref{OptimalStoppingConvention}) with running cost $\tilde{f}.$}

In what follows we assume that the running cost $f$ is continuously differentiable, increasing and concave. Moreover, to rule out the ``trivial cases'' where
either immediately stopping or never stopping is optimal
for this time consistent benchmark, we assume that $f(0)<rK, b<r,$ and $\lim_{x\to \infty}f_x(x)x = \infty$.\par
Define $L(x;r) = \mathbb{E}[\int^{\infty}_0e^{-rs}f(X_s)ds\vert X_0=x].$  Noting that $X$ is a geometric Brownian motion, we have after straightforward manipulations
\begin{align}\label{SpecialSolution}
L(x;r) = \int^{\infty}_0\int^{\infty}_0f(yx)e^{-rs}G(y,s)dyds,
\end{align}
where $G(y,s) = \frac{1}{\sqrt{2\pi}}\frac{1}{\sigma y\sqrt{s}}e^{-\frac{(\ln y-(b-\frac{1}{2}\sigma^2)s)^2}{2\sigma^2 s}}$.
To ensure $L$ and $L_x$ are well defined,  we further assume that $f$ has linear growth and $f_x(0+)<\infty$. \par
We now characterize the optimal stopping rule as follows.
\begin{proposition}\label{Prop:Bechmark}
There exists  $x_B>0$ such that the stopping rule $u_B(x)={\bf 1}_{x\ge x_B}(x)$
solves optimal stopping problem (\ref{OptimalStoppingConvention}). Moreover, $x_B$ is the unique solution of the following algebraic equation in $y$:
\begin{align}\label{SPEquation}
\alpha(r)[K-L(y;r)] + L_x(y;r)y = 0
\end{align}
where \begin{align}\label{Root}
\alpha(r) = \frac{-(b-\frac{1}{2}\sigma^2)+\sqrt{(b-\frac{1}{2}\sigma^2)^2+2\sigma^2r}}{\sigma^2}.
\end{align}
\end{proposition}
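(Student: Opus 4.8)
The plan is to run the smooth-pasting construction to produce a candidate value function with a single free boundary $x_B$, reduce the determination of $x_B$ to the scalar equation (\ref{SPEquation}), prove that this equation has exactly one root, and finally verify that the associated threshold rule solves the variational inequality
\[
\min\{\mathcal{L}V + f,\; K - V\} = 0, \qquad \mathcal{L}\varphi := \tfrac12\sigma^2x^2\varphi_{xx} + bx\varphi_x - r\varphi,
\]
so that optimality follows from the verification theorem (Theorem \ref{EquilibriumcharacterizationNon} with $F$ degenerate at $r$, equivalently the classical verification for optimal stopping, cf.\ \citeauthor{peskir2006optimal} \citeyear{peskir2006optimal}). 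On the conjectured continuation region $(0,x_B)$ the candidate solves $\mathcal{L}V + f = 0$ with $|V(0+)|<\infty$ and $V(x_B)=K$; since $x^{\alpha(r)}$, with $\alpha(r)$ the positive root given by (\ref{Root}), spans the bounded homogeneous solutions and $L(\cdot;r)$ is the particular solution, this forces $V(x)=(K-L(x_B;r))(x/x_B)^{\alpha(r)}+L(x;r)$, and imposing the $C^1$ (smooth-pasting) condition $V_x(x_B)=0$ yields exactly (\ref{SPEquation}) with $y=x_B$.

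The analytic core is the unique solvability of $\Phi(y):=\alpha(r)[K-L(y;r)]+yL_x(y;r)=0$ on $(0,\infty)$. First I record the structural properties of $L$: writing $X_s^x = x Y_s$ with $Y_s:=\exp((b-\tfrac12\sigma^2)s+\sigma W_s)$ shows $L(x;r)=\int_0^\infty e^{-rs}\mathbb{E}[f(xY_s)]\,ds$, so $L$ is increasing (as $f$ is) and concave in $x$ (as $f$ is, composed with the linear map $x\mapsto xY_s$); hence $L_x>0$ and $L_{xx}\le 0$. Next, since the characteristic quadratic $\tfrac12\sigma^2\lambda^2+(b-\tfrac12\sigma^2)\lambda-r$ is negative at $\lambda=0$ and at $\lambda=1$ equals $b-r<0$, its positive root satisfies $\alpha(r)>1$. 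Differentiating gives $\Phi'(y)=(1-\alpha(r))L_x(y;r)+yL_{xx}(y;r)<0$, so $\Phi$ is strictly decreasing. For the endpoints, $L(0+;r)=f(0)/r$ and $yL_x(y;r)\to 0$ as $y\to0+$ (using $f_x(0+)<\infty$ and $b<r$), whence $\Phi(0+)=\alpha(r)[K-f(0)/r]>0$ by the standing assumption $f(0)<rK$. The hard part is the endpoint $y\to\infty$: I would write $yL_x(y;r)-\alpha(r)L(y;r)=\int_0^\infty e^{-rs}\mathbb{E}[\phi(yY_s)]\,ds$ with $\phi(z):=zf_x(z)-\alpha(r)f(z)$, use the concavity inequality $zf_x(z)\le f(z)-f(0)$ to get $\phi(z)\le(1-\alpha(r))f(z)-f(0)$, note that $\lim_{x\to\infty}f_x(x)x=\infty$ forces $f(z)\to\infty$, and apply Fatou's lemma (with the integrable constant lower bound on $-\phi$) to conclude $\Phi(y)\to-\infty$. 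Strict monotonicity together with $\Phi(0+)>0$ and $\Phi(\infty)=-\infty$ then gives a unique root $x_B\in(0,\infty)$.

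For the verification step I would check the two inequalities of the variational inequality. To obtain $V\le K$, I note that $V_x(x)x^{1-\alpha(r)}=\alpha(r)(K-L(x_B;r))x_B^{-\alpha(r)}+L_x(x;r)x^{1-\alpha(r)}$ is a negative constant plus a product of two positive decreasing functions, hence decreasing, and vanishes at $x_B$; therefore $V_x>0$ on $(0,x_B)$ and $V<V(x_B)=K$ there, while $V=K$ on $[x_B,\infty)$. For the obstacle inequality $\mathcal{L}K+f=f-rK\ge0$ on the stopping region, I would evaluate $\mathcal{L}V+f=0$ at $x_B^-$ using $V_x(x_B)=0$ to get $f(x_B)=rK-\tfrac12\sigma^2x_B^2V_{xx}(x_B)$; since $V_{xx}(x_B)=(K-L(x_B;r))\alpha(r)(\alpha(r)-1)x_B^{-2}+L_{xx}(x_B;r)<0$ (here $K-L(x_B;r)<0$ follows directly from (\ref{SPEquation})), this gives $f(x_B)>rK$, and monotonicity of $f$ propagates $f\ge rK$ to all $x\ge x_B$. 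With $V\in C^1$ and $V'$ absolutely continuous by construction, the generalized It\^o formula together with the usual localization and optional-sampling argument shows $V(x)\le J(x;\tau)$ for every stopping time $\tau$, with equality for $\tau_{u_B}$; hence $u_B$ is optimal and $x_B$ is the unique solution of (\ref{SPEquation}).
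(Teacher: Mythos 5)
Your proof is correct, and it reaches the conclusion by a genuinely different route from the paper's on the optimality part. The paper starts from the value function $V^B$ itself: it invokes the standard regularity theory (Krylov) to assert that $V^B$ is $C^1$ with absolutely continuous derivative and solves the variational inequality, rules out the degenerate cases $\mathcal{S}^B=(0,\infty)$ (via $f(0)<rK$) and $\mathcal{C}^B=(0,\infty)$ (via $xf_x(x)\to\infty$ forcing $V^B(\infty)=\infty$), uses monotonicity of $V^B$ to get the threshold structure, and only then reads off the SP equation; optimality is thus inherited from the a priori characterization of the value function rather than re-verified. You run the argument in the opposite direction: construct the candidate, solve for the free boundary, and verify the two inequalities of the variational inequality directly — your observation that $V_x(x)x^{1-\alpha(r)}$ is a negative constant plus a product of two positive decreasing functions (giving $V\le K$ on the continuation region), and that $V_{xx}(x_B-)<0$ combined with $\mathcal{L}V+f=0$ forces $f(x_B)>rK$ (giving the obstacle inequality on the stopping region), supplies details the paper leaves to the cited standard theory. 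The analytic core — unique solvability of (\ref{SPEquation}) via strict monotonicity of $Q(y)=\alpha(r)[K-L(y;r)]+yL_x(y;r)$, positivity at $0+$, and divergence to $-\infty$ — is essentially identical in both; you handle the $y\to\infty$ endpoint by pushing the concavity bound $zf_x(z)\le f(z)-f(0)$ through the expectation with Fatou, while the paper applies the equivalent bound $\int_0^yL_x(s;r)\,ds\ge yL_x(y;r)$ directly to $L$. Your approach buys a self-contained verification that does not presuppose the threshold structure of the optimal rule; the paper's buys brevity by outsourcing regularity and optimality to the classical theory.
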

The key to proving this theorem is to make us of the SP; see Appendix A.2.

\subsection{Equivalent conditions under time-inconsistency}

We now consider exactly the same stopping problem as the above time-consistent benchmark except that the exponential discount function is replaced by  a WDF, namely, the cost  functional is changed to
\begin{align}\label{SPCostFunctional}
J(x;\tau) = \mathbb{E}\left[\int^{\tau}_0h(s)f(X_s)ds+h(\tau)K\Big\vert X_0=x\right],
\end{align}
where $h$ is a WDF with a weighting distribution $F$.\par
As in the case of exponential discounting, we need to impose the following regularity  conditions on the parameters of the problem:
\[ b<r,\; \forall r\in\text{supp}(F); \;\mbox{and } \max\left\{\int^{\infty}_0\frac{1}{r-b}dF(r),\int^{\infty}_0\frac{1}{r}dF(r),
\int^{\infty}_0rdF(r)\right\}<\infty.
\] These conditions either hold automatically or reduce to the respective counterparts when the discount function degenerates into the exponential one. On the other hand, they hold valid
with many genuine WDFs, including the  generalized hyperbolic discount function (\ref{ghd}) when $\gamma<\beta$ and the pseudo-exponential discount function.\par
We now attempt to use the SP principle to solve the Bellman system in Theorem \ref{EquilibriumcharacterizationNon} with the cost functional (\ref{SPCostFunctional}). We start by conjecturing that the equilibrium stopping region is $[x_*,\infty)$ for some $x_*>0$. (As in the time-consistent case, $x_*$ is called
the triggering boundary or the stopping threshold.)

It follows from the Feynman--Kac formula  that $w$ in the Bellman system is given by
\[
w(x;r) =\left\{\begin{array}{ll}
 (K-L(x_*;r))\left(\frac{x}{x_*}\right)^{\alpha(r)} + L(x;r),& x<x_*,\\
K, & x\ge x_*,
\end{array}\right.
\]
where $L(x;r)$ is defined by (\ref{SpecialSolution}) and $\alpha(r)$  by (\ref{Root}). Recall we have defined $V$ and $\hat u$ by
\[
V(x) = \int^{\infty}_0w(x;r)dF(r) =\left\{\begin{array}{ll} \int^{\infty}_0((K-L(x_*;r))(\frac{x}{x_*})^{\alpha(r)} + L(x;r))dF(r),& x<x_*,  \\
K, &x\ge x_*
\end{array}\right.
\]
and
\begin{align*}
\hat{u}(x) =\left\{
  \begin{array}{rcr}
    0 & \quad \text{$x< x_*$},\\
    1 & \quad \text{otherwise}.
  \end{array}
\right.
\end{align*}
The SP applied to $V$ ({\it not} to $w$) yields $V_x(x_*) = 0$, implying that $x_*$ is the solution to the following algebraic equation in $y$
\begin{align}\label{EquationEquilibrium}
\int^{\infty}_0\left[\alpha(r)(K-L(y;r)) + L_x(y;r)y\right]dF(r) = 0.
\end{align}
Clearly, this equation is a generalization of its time-consistent counterpart,
(\ref{SPEquation}). The following proposition stipulates that it has a unique solution.
\begin{proposition}\label{Prop:SPEquationEquilibrium}
Equation (\ref{EquationEquilibrium}) admits a unique solution in $(0,\infty).$
\end{proposition}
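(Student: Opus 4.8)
The plan is to write equation (\ref{EquationEquilibrium}) as $\Phi(y)=0$, where
\[
\Phi(y):=\int_0^\infty\phi(y;r)\,dF(r),\qquad \phi(y;r):=\alpha(r)\bigl(K-L(y;r)\bigr)+yL_x(y;r),
\]
and to prove that $\Phi$ is continuous and strictly decreasing on $(0,\infty)$ with $\Phi(0+)>0$ and $\lim_{y\to\infty}\Phi(y)<0$; the intermediate value theorem then yields a root, and strict monotonicity makes it unique. The strategy is to establish the required properties pointwise in $r$ for the integrand $\phi(\cdot;r)$ and then transfer them to $\Phi$ by integrating against the probability measure $F$, thereby reducing most of the work to the already-understood time-consistent object in (\ref{SPEquation}).

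\textbf{Monotonicity.} First I would show that $\phi(\cdot;r)$ is strictly decreasing for every $r\in\text{supp}(F)$. Since $X$ is a geometric Brownian motion, $L(\cdot;r)$ in (\ref{SpecialSolution}) is $C^2$ (it solves the resolvent ODE $\tfrac12\sigma^2x^2L_{xx}+bxL_x-rL+f=0$), and differentiating gives
\[
\phi_y(y;r)=\bigl(1-\alpha(r)\bigr)L_x(y;r)+yL_{xx}(y;r).
\]
Because $f$ is increasing we have $L_x>0$, and because $x\mapsto f(zx)$ is concave for each $z>0$ the function $L(\cdot;r)$ is concave, so $L_{xx}\le 0$. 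The crucial observation is that $\alpha(r)>1$ for every $r$ in the support: from (\ref{Root}), $\alpha$ is strictly increasing in $r$ with $\alpha(b)=1$, so the standing assumption $b<r$ forces $\alpha(r)>1$. Hence both terms above are nonpositive and the first is strictly negative, giving $\phi_y(y;r)<0$. Integrating the resulting strict inequality $\phi(y_1;r)>\phi(y_2;r)$ (for $y_1<y_2$) against $F$ then shows that $\Phi$ is strictly decreasing.

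\textbf{Endpoints.} Next I would identify the two boundary limits, using that $\phi(\cdot;r)$ is monotone so that monotone convergence applies. As $y\downarrow 0$, $L(y;r)\to f(0)/r$ and $yL_x(y;r)\to 0$ (the latter using $f_x(0+)<\infty$), whence $\phi(y;r)\uparrow\alpha(r)\bigl(K-f(0)/r\bigr)>0$ by the assumption $f(0)<rK$; therefore $\Phi(0+)=\int_0^\infty\alpha(r)\bigl(K-f(0)/r\bigr)\,dF(r)>0$. For the other end, I would invoke Proposition \ref{Prop:Bechmark}: for each fixed $r$, the map $\phi(\cdot;r)$ is exactly the left-hand side of (\ref{SPEquation}), which is strictly decreasing, positive at $0+$, and has its unique zero at the benchmark threshold $x_B(r)$; hence $\phi(y;r)<0$ for $y>x_B(r)$ and the limit $\phi(\infty;r):=\lim_{y\to\infty}\phi(y;r)$ is strictly negative. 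Since the integrand is strictly negative pointwise, $\lim_{y\to\infty}\Phi(y)=\int_0^\infty\phi(\infty;r)\,dF(r)<0$. Combining continuity (via dominated convergence) with strict monotonicity and the sign change then delivers a unique root in $(0,\infty)$.

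\textbf{Main obstacle.} The genuinely delicate part is not the monotonicity but the analytic bookkeeping: finiteness and continuity of $\Phi$, together with the passage of limits inside the integral. This is precisely where the imposed integrability conditions $\int_0^\infty\frac{1}{r-b}\,dF(r),\ \int_0^\infty\frac1r\,dF(r),\ \int_0^\infty r\,dF(r)<\infty$ enter: together with the linear growth of $f$ they yield $L(y;r)=O\!\bigl(1/(r-b)\bigr)$ locally in $y$ and $\alpha(r)=O(\sqrt r)$, so that $|\phi(y;r)|$ admits an $F$-integrable local dominating function and the monotone and dominated convergence theorems apply. The one conceptual step that must not be skipped is the inequality $\alpha(r)>1$; it is what makes each $\phi(\cdot;r)$ decreasing and hence simultaneously drives uniqueness and the correct signs at the two endpoints, and it is exactly the place where the hypothesis $b<r$ is used.
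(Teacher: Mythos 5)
Your proposal is correct and follows essentially the same route as the paper: the paper's proof also reduces to showing that the integrand $\alpha(r)(K-L(y;r))+yL_x(y;r)$ is strictly decreasing in $y$ (using $\alpha(r)>1$, $L_x>0$, $L_{xx}\le 0$), positive as $y\downarrow 0$, and negative (in fact tending to $-\infty$) as $y\to\infty$, and then integrates these properties against $F$. Your additional care about dominated/monotone convergence and the integrability hypotheses is a welcome tightening of a step the paper leaves implicit, but it is not a different argument.
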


\begin{proof}
Following the same lines of proof of Proposition \ref{Prop:Bechmark} (Appendix A.2), we have that $Q(x):=\int^{\infty}_0(\alpha(r)(K-L(x;r)) + L_x(x;r)x)dF(r)$ is strictly decreasing in $x>0$, with $Q(0)>0$ and $Q(\infty)<0.$ This completes the proof.
\end{proof}

Proposition \ref{Prop:SPEquationEquilibrium} indicates that following  the
conventional SP line of argument does indeed give rise to a {\it candidate} solution to the Bellman system, even  under time inconsistency. We may be tempted to
claim, as taken for granted in the time-consistent case, that this candidate solution solves the Bellman system in Theorem \ref{EquilibriumcharacterizationNon} and hence the corresponding stopping rule $\hat u$  solves the equilibrium stoppping problem.  Unfortunately, this is not always the case, as shown in the following result.
\begin{theorem}\label{Prop:SPGeneral}
Assume  that $\alpha(r)[\alpha(r) - 1][K-L(x_*;r)]$ is increasing in $r\in\text{supp}(F)$, and let $x_*$ be the unique solution to (\ref{EquationEquilibrium}). Then the triplet $(V,w,\hat{u})$ solves the Bellman system in Theorem \ref{EquilibriumcharacterizationNon} and in particular
$\hat{u}$ is an equilibrium stopping rule if and only if
\begin{align}\label{InequalityGeneral2}
f(x_*) \ge \int^{\infty}_0rdF(r)K,
\end{align}
and
\begin{align}\label{InequalityGeneral1}
\int^{\infty}_0\alpha(r)[\alpha(r) - 1][K-L(x_*;r)]dF(r)+\int^{\infty}_0x_*^2L_{xx}(x_*;r)dF(r)\le 0.
\end{align}
\end{theorem}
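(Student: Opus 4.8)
The plan is to verify the Bellman system \eqref{StationaryPDENon}--\eqref{StationaryuhatContinuousNon} pointwise for the explicit candidate $(V,w,\hat u)$ and then invoke Theorem \ref{EquilibriumcharacterizationNon}. Because the minimum in \eqref{StationaryPDENon} is already attained region by region, the verification reduces to two pointwise inequalities. On the conjectured stopping region $[x_*,\infty)$ the obstacle $g-V=K-V$ vanishes by construction, so I need the generator term to be nonnegative there; on the continuation region $(0,x_*)$ the generator term vanishes identically, because by the Feynman--Kac formula each $w(\cdot;r)$ solves $\tfrac12\sigma^2x^2w_{xx}+bxw_x+f-rw=0$ and integrating against $dF$ reproduces exactly the first entry of \eqref{StationaryPDENon}, so there I only need $K-V\ge 0$. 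Thus the whole theorem amounts to the equivalence of (i) ``the generator is nonnegative on $[x_*,\infty)$'' with \eqref{InequalityGeneral2}, and (ii) ``$V\le K$ on $(0,x_*)$'' with \eqref{InequalityGeneral1}.

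The region $[x_*,\infty)$ is immediate: there $V\equiv K$ and $w\equiv K$, so the generator term collapses to $f(x)-K\int_0^\infty r\,dF(r)$, and since $f$ is increasing this is nonnegative for all $x\ge x_*$ iff it is nonnegative at $x_*$, which is precisely \eqref{InequalityGeneral2}. Next I record the key boundary identity for the second derivative. Differentiating the explicit form of $V$ twice and letting $x\uparrow x_*$ gives $x_*^2V_{xx}(x_*^-)$ equal to the left-hand side of \eqref{InequalityGeneral1}; on the other hand, taking $x\uparrow x_*$ in the continuation-region equation and using the smooth-pasting condition $V_x(x_*)=0$ together with $w(x_*;r)=K$ yields $\tfrac12\sigma^2x_*^2V_{xx}(x_*^-)=K\int_0^\infty r\,dF(r)-f(x_*)$. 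Comparing the two shows that \eqref{InequalityGeneral1} says exactly $V_{xx}(x_*^-)\le 0$ and is in fact equivalent to \eqref{InequalityGeneral2}. This delivers necessity at once: if $(V,w,\hat u)$ solves the Bellman system then \eqref{InequalityGeneral2} holds by the stopping-region analysis, and $V\le K$ near $x_*^-$ with $V(x_*)=K$, $V_x(x_*)=0$ forces $V_{xx}(x_*^-)\le 0$, i.e. \eqref{InequalityGeneral1}.

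The substance is the sufficiency of \eqref{InequalityGeneral1} for the global bound $V(x)\le K$ on all of $(0,x_*)$; the local second-order condition alone is not enough, and this is exactly where the monotonicity hypothesis enters. The difficulty is structural: because of time-inconsistency $V$ does not solve an autonomous second-order ODE, since the coupling term $\int_0^\infty r\,w(x;r)\,dF(r)$ cannot be written as $\rho V$ for a single rate $\rho$, so a plain maximum principle is unavailable and the individual $w(\cdot;r)$ may themselves exceed $K$. I would instead analyze the sign of $V_{xx}$ directly. Writing the homogeneous part as $(x/x_*)^{\alpha(r)}=e^{-\alpha(r)t}$ with $t=-\ln(x/x_*)>0$, one gets $x^2V_{xx}(x)=\int_0^\infty a(r)e^{-\alpha(r)t}\,dF(r)+x^2\int_0^\infty L_{xx}(x;r)\,dF(r)$, where $a(r):=\alpha(r)[\alpha(r)-1][K-L(x_*;r)]$. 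The second integral is $\le 0$ because the concavity of $f$ makes each $L(\cdot;r)$ concave, and $\alpha(r)>1$ since $r>b$. For the first integral I would use that $\alpha(r)$ is increasing in $r$, so $t\mapsto e^{-\alpha(r)t}$ is a totally positive kernel and the assumed monotonicity of $a(r)$ limits the number of sign changes of this Laplace-type transform (equivalently, via Chebyshev's correlation inequality the oppositely-ordered factors $a(r)$ and $e^{-\alpha(r)t}$ can be decoupled).

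Combining this single-crossing control of $V_{xx}$ with the boundary data $V(x_*)=K$, $V_x(x_*)=0$ and $V_{xx}(x_*^-)\le 0$ from \eqref{InequalityGeneral1}, together with the left-end value $V(0^+)=f(0)\int_0^\infty r^{-1}\,dF(r)$ that the standing assumptions keep $\le K$, I would conclude that $K-V$ cannot dip below zero inside $(0,x_*)$, i.e. $V\le K$ there. Feeding both region inequalities back into Theorem \ref{EquilibriumcharacterizationNon} then shows that $\hat u$ is an equilibrium stopping rule and $V$ the value function, completing sufficiency. The main obstacle is precisely this global step, namely upgrading the boundary second-order condition to the pointwise bound on $(0,x_*)$; without the monotonicity hypothesis this upgrade can genuinely fail, which is what ultimately drives the nonexistence phenomenon.
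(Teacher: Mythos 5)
Your reduction of the theorem to two pointwise inequalities (generator nonnegative on $[x_*,\infty)$; $V\le K$ on $(0,x_*)$), your treatment of the stopping region, and your necessity argument all coincide with the paper's proof in Appendix A.3 (the paper derives the necessity of (\ref{InequalityGeneral2}) from Lemma \ref{Lemma:Non-existence} rather than reading it off the Bellman equation on the stopping region, but the two routes are interchangeable, and your version of the necessity of (\ref{InequalityGeneral1}) via $V_{xx}(x_*^-)\le 0$ is exactly the paper's). Your boundary identity $\tfrac12\sigma^2x_*^2V_{xx}(x_*^-)=K\int_0^\infty r\,dF(r)-f(x_*)$, obtained from the continuation-region equation together with $V_x(x_*)=0$ and $w(x_*^-;r)=K$, is correct, and the resulting observation that (\ref{InequalityGeneral1}) and (\ref{InequalityGeneral2}) are equivalent to each other (both saying $V_{xx}(x_*^-)\le0$) is sharper than anything the paper makes explicit at this level of generality; the paper only records the coincidence of the two conditions in the real-option special case.

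The genuine gap is in the sufficiency step you yourself flag as ``the substance'': proving $V\le K$ on all of $(0,x_*)$. The paper does this by establishing \emph{global} concavity of $V$ on $(0,x_*)$: writing $x^2V_{xx}(x)=\int_0^\infty a(r)(x/x_*)^{\alpha(r)}dF(r)+x^2\int_0^\infty L_{xx}(x;r)dF(r)$ with $a(r)=\alpha(r)[\alpha(r)-1][K-L(x_*;r)]$, noting the second integral is $\le0$ by concavity of $L$, and applying the rearrangement (Chebyshev) inequality (\ref{InequalityRearrangement}) to the anti-comonotone pair $a(r)$ (increasing by hypothesis) and $(x/x_*)^{\alpha(r)}$ (decreasing in $r$ when $x<x_*$) to bound the first integral by $\int_0^\infty a(r)dF(r)\cdot\int_0^\infty(x/x_*)^{\alpha(r)}dF(r)$, whose sign is then controlled by (\ref{InequalityGeneral1}); from $V_{xx}\le0$ and $V_x(x_*)=0$ one gets $V_x\ge0$, hence $V\le V(x_*)=K$. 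You identify exactly this decoupling tool but do not carry it to the conclusion $V_{xx}\le0$ on all of $(0,x_*)$; instead you substitute an unproved ``single-crossing control'' of $V_{xx}$ via total positivity, plus an endpoint argument using $V(0^+)\le K$. As written this does not close: the variation-diminishing claim is asserted rather than proved, and adding the nonpositive but non-monotone term $x^2\int_0^\infty L_{xx}(x;r)dF(r)$ can destroy any single-crossing structure of the Laplace-type transform, so the convex-then-concave picture your endpoint argument needs is not justified. (Your instinct that the local condition $V_{xx}(x_*^-)\le0$ alone is insufficient is right, but the paper's fix is the global rearrangement bound valid at every $x\in(0,x_*)$, not a sign-change count.) To repair the argument, finish the Chebyshev step and deduce $V_{xx}\le0$ pointwise; this is immediate in the real-option case where $L_{xx}\equiv0$, and in the general concave-$L$ case it is the one place where the passage from (\ref{InequalityGeneral1}) to the sign of $\int_0^\infty a(r)dF(r)$ deserves an explicit extra line.
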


As a proof is lengthy, we defer it to Appendix A.3.

The above theorem presents characterizing conditions (on the model primitives) for
the SP to work for stopping problems with general WDFs. These conditions are satisfied automatically in the classical time-consistent case,
but not in the time-inconsistent case in general. We will demonstrate this
with a classical real option problem in the next subsection.

\subsection{A real option problem: failure of SP and nonexistence of equilibrium}\label{Sec:example}
In this subsection we consider a special case of the model studied in the previous subsection, which is a time-inconsistent counterpart of the well-studied (time-consistent) problem
of real options. Such a problem can be used to model, among others,
when to start a new project or to abandon an ongoing project; see \cite{dixit1993art} for a systematic account
on the classical real options theory.

 %
%

The problem is to minimize
\begin{align}\label{Costfunctional}
\mathbb{E}\left[\int^{\tau}_0h(s)X_sds+h(\tau)K\Big\vert X_0=x\right]
\end{align}
by choosing $\tau\in{\cal T}$, where $X=\{X_t\}_{t\ge 0}$ is governed by\footnote{Here we assume that the geometric Brownian motion is driftless without loss of generality.}

\begin{align}\label{ExampleNon}
dX_t=\sigma X_tdW_t.
\end{align}


We now apply Theorem \ref{Prop:SPGeneral} to this problem, and see what
the equivalent conditions (\ref{InequalityGeneral2}) and (\ref{InequalityGeneral1})
boil down to.

First of all,
$$L(x;r) = \mathbb{E}[\int^{\infty}_0e^{-rt}X_tdt\Big\vert X_0=x]=\frac{x}{r}.$$
Hence

\[
w(x;r) =\left\{\begin{array}{ll}
 \left(K-\frac{x_*}{r}\right) \left(\frac{x}{x_*}\right)^{\alpha(r)}+\frac{x}{r},& x<x_*,\\
K, & x\ge x_*,
\end{array}\right.
\]
\[
V(x) =\left\{\begin{array}{ll}
\int^{\infty}_0\left(K-\frac{x_*}{r}\right)\left(\frac{x}{x_*}\right)^{\alpha(r)}dF(r)
+\int^{\infty}_0\frac{x}{r}dF(r),& x<x_*,  \\
K, &x\ge x_*,
\end{array}\right.
\]
and
\begin{align*}
\hat{u}(x) =\left\{
  \begin{array}{rcr}
    0 & \quad \text{ $x< x_*$},\\
    1 & \quad \text{otherwise},
  \end{array}
\right.
\end{align*}
where \begin{align}\label{ImportantFactorNon}
\alpha(r) = \frac{\frac{1}{2}\sigma^2+\sqrt{\frac{1}{4}\sigma^4+2\sigma^2r}}{\sigma^2}.
\end{align}
Moreover, it follows from (\ref{EquationEquilibrium}) that $x_*$ is the solution to the following equation in $y$:
\begin{align*}
\int^{\infty}_0\left(K-\frac{y}{r}\right)\alpha(r)dF(r)+\int^{\infty}_0\frac{y}{r}dF(r)=0.
\end{align*}
Thus
\begin{align}\label{TriggeringBoundaryNon}
x_*=\displaystyle\frac{\displaystyle\int^{\infty}_0\alpha(r) dF(r)}{\displaystyle\int^{\infty}_0\frac{\alpha(r)-1}{r} dF(r)}K.
\end{align}

Next, it is easy to verify that
$ \alpha(r)[\alpha(r) - 1][K-L(x_*;r)] = \frac{2}{\sigma^2}(Kr-x_*);$
hence it is an increasing function in $r\geq0$.
Moreover, substituting the explicit representation of $x_*$ in  (\ref{TriggeringBoundaryNon}) into (\ref{InequalityGeneral2}) and (\ref{InequalityGeneral1}) we find that the latter two inequalities are both identical to the following single inequality
\begin{align}\label{ImportantConditionNon}
  \displaystyle\int^{\infty}_0\alpha(r) dF(r)\ge\int^{\infty}_0rdF(r) \displaystyle\int^{\infty}_0\frac{\alpha(r)-1}{r} dF(r).
\end{align}

We have proved the following

\begin{proposition}\label{ExplicitSolutionNon}
The triplet $(V,w,\hat{u})$
solves the Bellman system of the real option problem  if and only if
(\ref{ImportantConditionNon}) holds.
\end{proposition}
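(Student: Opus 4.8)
The plan is to recognize the real option problem as the special case of the general time-inconsistent stopping problem of Section~\ref{Sec:SPGeneral} in which the running cost is $f(x)=x$ and the dynamics \eqref{ExampleNon} correspond to zero drift ($b=0$), and then to invoke Theorem~\ref{Prop:SPGeneral}. First I would check that the standing hypotheses of that framework are met: $f(x)=x$ is continuously differentiable, increasing and (being linear) concave; the conditions $f(0)<rK$ and $b<r$ hold for every $r\in\text{supp}(F)$ since $r>0$; and $\lim_{x\to\infty}f_x(x)x=\lim_{x\to\infty}x=\infty$. A direct computation using $\mathbb{E}[X_t\mid X_0=x]=x$ for the driftless geometric Brownian motion gives $L(x;r)=x/r$, hence $L_x(x;r)=1/r$ and, crucially, $L_{xx}(x;r)\equiv 0$. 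These are exactly the values already used to produce the explicit $w$, $V$, $\hat u$ and the triggering boundary $x_*$ in \eqref{TriggeringBoundaryNon}.

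Second, I would verify the monotonicity hypothesis of Theorem~\ref{Prop:SPGeneral}, namely that $\alpha(r)[\alpha(r)-1][K-L(x_*;r)]$ is increasing in $r$. The key identity is $\alpha(r)[\alpha(r)-1]=\tfrac{2r}{\sigma^2}$, which follows from the characteristic equation $\tfrac12\sigma^2\alpha(\alpha-1)=r$ obtained by substituting $x^{\alpha}$ into the homogeneous ODE $\tfrac12\sigma^2x^2V_{xx}-rV=0$ attached to the driftless generator; this is consistent with the explicit root in \eqref{ImportantFactorNon}. Combined with $L(x_*;r)=x_*/r$, it yields $\alpha(r)[\alpha(r)-1][K-L(x_*;r)]=\tfrac{2}{\sigma^2}(Kr-x_*)$, which is affine in $r$ with positive slope $2K/\sigma^2$ and therefore increasing. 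Thus the hypothesis of Theorem~\ref{Prop:SPGeneral} is satisfied and the ``if and only if'' it provides is available.

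Third, I would reduce the two inequalities \eqref{InequalityGeneral2} and \eqref{InequalityGeneral1} to a single condition. Since $f(x_*)=x_*$, inequality \eqref{InequalityGeneral2} reads $x_*\ge K\int^{\infty}_0 r\,dF(r)$. For \eqref{InequalityGeneral1}, the observation $L_{xx}\equiv 0$ makes the second integral vanish, while the first integral equals $\tfrac{2}{\sigma^2}\left(K\int^{\infty}_0 r\,dF(r)-x_*\right)$ by the identity above; requiring this to be $\le 0$ again gives $x_*\ge K\int^{\infty}_0 r\,dF(r)$. Hence both inequalities collapse to the same condition. Substituting the explicit expression \eqref{TriggeringBoundaryNon} for $x_*$ and dividing by $K>0$ turns it into $\int^{\infty}_0\alpha(r)\,dF(r)\big/\int^{\infty}_0\frac{\alpha(r)-1}{r}\,dF(r)\ge \int^{\infty}_0 r\,dF(r)$, which after clearing the denominator is exactly \eqref{ImportantConditionNon}.

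I expect no genuine obstacle here: once $L(x;r)=x/r$ and the identity $\alpha(r)[\alpha(r)-1]=2r/\sigma^2$ are in hand, the remainder is bookkeeping, and the substance of the result is inherited from Theorem~\ref{Prop:SPGeneral}. The one point that demands care is the sign of $\alpha(r)-1$, which is needed both to know that the denominator in \eqref{TriggeringBoundaryNon} is positive and to clear it without reversing the inequality: from $\alpha(r)[\alpha(r)-1]=2r/\sigma^2>0$ together with $\alpha(r)>0$ one concludes $\alpha(r)>1$ for every $r>0$, so that $\int^{\infty}_0\frac{\alpha(r)-1}{r}\,dF(r)>0$. This justifies the manipulations and completes the equivalence.
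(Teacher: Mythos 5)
Your proposal is correct and follows essentially the same route as the paper: specializing Theorem \ref{Prop:SPGeneral} to $f(x)=x$, $b=0$, computing $L(x;r)=x/r$ so that $L_{xx}\equiv 0$, using $\alpha(r)[\alpha(r)-1]=2r/\sigma^2$ to verify the monotonicity hypothesis and to collapse (\ref{InequalityGeneral2}) and (\ref{InequalityGeneral1}) into the single condition (\ref{ImportantConditionNon}) via the explicit $x_*$ in (\ref{TriggeringBoundaryNon}). Your explicit check that $\alpha(r)>1$, which keeps the denominator positive when clearing it, is a point the paper leaves implicit but is the same argument.
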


Inequality (\ref{ImportantConditionNon}) is a critical condition on the model primitives
we must verify before we can be sure that the solution constructed through the SP is indeed an equilibrium solution to the time-inconsistent real option problem. It is immediate to see that the strict inequality of (\ref{ImportantConditionNon}) is satisfied {\it trivially} when the distribution function $F$ is degenerate corresponding to the classical time-consistent case with an exponential discount function. In this  case, $x_*$ defined by (\ref{TriggeringBoundaryNon}) coincides with the stopping threshold derived in Subsection \ref{tcc}.
This reconciles with the time-consistent setting.

The condition   (\ref{ImportantConditionNon}) may hold for some non-exponential discount functions. Consider a generalized hyperbolic discount function
\begin{equation*}
h(t)=\frac{1}{(1+\gamma t)^{\frac{\beta}{\gamma}}}\equiv \int^{\infty}_0e^{-rt}\frac{r^{\frac{\beta}{\gamma}-1}e^{-\frac{r}{\gamma}}}
{\gamma^{\frac{\beta}{\gamma}}\Gamma(\frac{\beta}{\gamma})}dr,\;\;\gamma>0,\beta>0.
\end{equation*}
We assume that  $\gamma<\beta\le \frac{\sigma^2}{2}.$
Noting that
$
\alpha(r)-1 = -\frac{1}{2}+\frac{\sqrt{\frac{1}{4}\sigma^4+2\sigma^2 r}}{\sigma^2}
$
is a concave function in $r$,  we have
\begin{align*}
\alpha(r)-1 \le (\alpha(r)-1)'\vert_{r=0}r+\alpha(0)-1=\frac{2}{\sigma^2}r.
\end{align*}
Moreover, it is easy to see that
\begin{align*}
\int^{\infty}_0rdF(r) = \beta \text{ and } \alpha(r)\ge 1.
\end{align*}
Therefore,
\begin{align*}
\int^{\infty}_0\frac{\alpha(r)-1}{r}dF(r)\int^{\infty}_0rdF(r)\le \beta \frac{2}{\sigma^2}\le 1\le \int^{\infty}_0\alpha(r)dF(r)
\end{align*}
which is (\ref{ImportantConditionNon}). So, in this case the SP works and the stopping threshold $x_*$ is given by
\begin{align*}
x_*=\displaystyle\frac{\displaystyle\int^{\infty}_0\alpha(r) \frac{r^{\frac{\beta}{\gamma}-1}e^{-\frac{r}{\gamma}}}{\gamma^{\frac{\beta}{\gamma}}\Gamma(\frac{\beta}{\gamma})}dr}{\displaystyle\int^{\infty}_0\frac{\alpha(r)-1}{r} \frac{r^{\frac{\beta}{\gamma}-1}e^{-\frac{r}{\gamma}}}{\gamma^{\frac{\beta}{\gamma}}\Gamma(\frac{\beta}{\gamma})}dr}K.
\end{align*}

However, it is also possible that (\ref{ImportantConditionNon}) fails, which is the case even with the simplest class of non-exponential WDFs -- the pseudo-exponential discount functions. 
To see this, let  $h(t) = \delta e^{-rt}+(1-\delta)e^{-(r+\lambda)t},0<\delta <1, r>0,\lambda>0.$
It is straightforward to obtain that
\begin{align*}
\int^{\infty}_0\alpha(r)dF(r)=\delta \alpha(r)+(1-\delta)\alpha(r+\lambda)
\end{align*}
and
\begin{align*}
\int^{\infty}_0rdF(r) \displaystyle\int^{\infty}_0\frac{\alpha(r)-1}{r} dF(r)>(1-\delta)(r+\lambda)\delta\left(\frac{\alpha(r)-1}{r}\right).
\end{align*}
Since $(1-\delta)(r+\lambda)\delta(\frac{\alpha(r)-1}{r})$ grows faster than $\delta \alpha(r)+(1-\delta)\alpha(r+\lambda)$ when  $\lambda$ becomes large, we conclude that (\ref{ImportantConditionNon}) is violated when $r,\delta$ are fixed and  $\lambda$ is sufficiently large.

What we have discussed so far  shows that the solution constructed through the SP does not solve  the
time-inconsistent real option problem whenever  inequality (\ref{ImportantConditionNon}) fails. A natural question in this case is whether there might
exist equilibrium solutions that cannot be obtained  by the SP  or even by the Bellman system. The answer is resoundingly negative.
\begin{proposition}\label{Prop:Nonexistence}For the real option problem  (\ref{Costfunctional})--(\ref{ExampleNon}), if
 (\ref{ImportantConditionNon}) does not hold,
then no equilibrium stopping rule exists.
\end{proposition}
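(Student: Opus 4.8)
The plan is to extract from Definition \ref{Def:EquilibriumStopping} two first-order necessary conditions that any equilibrium $\hat u$ must satisfy, and then to show that together they force the lowest stopping level into an empty interval once (\ref{ImportantConditionNon}) fails. Throughout I write $\kappa:=K\int_0^\infty r\,dF(r)$ and let $x_*$ be the SP threshold (\ref{TriggeringBoundaryNon}); a direct comparison using (\ref{TriggeringBoundaryNon}) shows that the failure of (\ref{ImportantConditionNon}) is \emph{equivalent} to $x_*<\kappa$. The target is therefore to prove that every equilibrium produces a number $x_0$ with $\kappa\le x_0\le x_*$, which is impossible when $x_*<\kappa$.

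First I would exploit the perturbation $a=1$. Since $\tau^{\epsilon,1}=0$ gives $J(x;\tau^{\epsilon,1})=g(x)=K$ for every $\epsilon$, the numerator in (\ref{StationaryDefinitionInequalityNon}) is independent of $\epsilon$, so the $\limsup$ can be nonpositive only if $V(x)=J(x;\tau_{\hat u})\le K$ for all $x$. In particular never-stopping cannot be an equilibrium: if the stopping region is empty then $V(x)=\mathbb E[\int_0^\infty h(s)X_s\,ds]=x\int_0^\infty\frac1r\,dF(r)\to\infty$, violating $V\le K$. Next I would use $a=0$ at an interior point $x$ of the stopping region. There the process stays in the stopping region on $[0,\epsilon]$ with probability $1-o(\epsilon)$, so $\tau^{\epsilon,0}=\epsilon$ up to an $o(\epsilon)$ event, and a short It\^o/martingale expansion (using $\mathbb E[X_s\mid X_0=x]=x$ and $h(t)=1-(\int r\,dF)t+o(t)$) gives $J(x;\tau^{\epsilon,0})=K+\epsilon\,(x-\kappa)+o(\epsilon)$. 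Because $V(x)=K$ on the stopping region, (\ref{StationaryDefinitionInequalityNon}) reduces to $\kappa-x\le0$, i.e. every interior stopping point satisfies $x\ge\kappa$; consequently $(0,\kappa)$ lies in the continuation region.

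I would then set $x_0:=\sup\{x>0:(0,x)\subseteq\mathcal C_{\hat u}\}$. By the previous step $x_0\ge\kappa>0$, and the nonemptiness of the stopping region forces $x_0<\infty$. On $(0,x_0)$ the rule continues and, the process being positive with $\bar{\mathcal S}_{\hat u}\subseteq[x_0,\infty)$, we have $\tau_{\hat u}=\inf\{t:X_t=x_0\}$; hence on $(0,x_0)$ the functions $w(\cdot;r)$ and $V$ coincide \emph{exactly} with the threshold-$x_0$ expressions built in Subsection \ref{Sec:example} (replacing $x_*$ by $x_0$), with $V(x_0^-)=K$. The constraint $V\le K$ together with $V(x_0)=K$ forces $V_x(x_0^-)\ge0$. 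A one-line differentiation of the explicit $V$ (using $L(x;r)=x/r$) yields $V_x(x_0^-)=\frac1{x_0}\big[K\int\alpha(r)\,dF(r)-x_0\int\frac{\alpha(r)-1}{r}\,dF(r)\big]$, so that $V_x(x_0^-)\ge0$ is equivalent to $x_0\le x_*$. Combining the two bounds gives $\kappa\le x_0\le x_*$, contradicting $x_*<\kappa$; therefore no equilibrium can exist.

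The main obstacle is the structural/regularity step hidden in the definition of $x_0$: I must guarantee that the region strictly below the first stopping level is \emph{purely} continuation, so that $V$ there is genuinely the threshold-$x_0$ value, and that $x_0\ge\kappa$. This requires ruling out irregular stopping sets — isolated stopping points, or nowhere-dense/oscillatory stopping configurations below $\kappa$ — for which the naive expansion of $J(x;\tau^{\epsilon,0})$ must be re-examined, since a forced return to a single stopping point can accumulate large running cost. A clean way to close this gap is to argue on the leftmost continuation component and invoke continuity of $V$ together with the generator-sign condition $\tfrac12\sigma^2x^2V_{xx}+f(x)-\int r w\,dF(r)\ge0$ inherited from $a=0$ on $\bar{\mathcal S}_{\hat u}$. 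Once the threshold picture below $x_0$ is secured, the remaining steps are routine because $L(x;r)=x/r$ renders every quantity explicit.
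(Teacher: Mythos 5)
Your proposal is correct and follows essentially the same route as the paper: the $a=1$ perturbation gives $V\le K$ (the paper's Lemma \ref{Lemma:PositiveValueFunction}), the $a=0$ perturbation at points of $\bar{\mathcal S}_{\hat u}$ gives the lower bound $x_0\ge K\int_0^\infty r\,dF(r)$ (Lemma \ref{Lemma:Non-existence}), and $V_x(x_0^-)\ge 0$ on the leftmost continuation component gives $x_0\le x_*$, yielding the same contradiction. The regularity obstacle you flag is exactly what the paper handles via the continuity of $w(\cdot;r)$ (Lemma \ref{Lem:Regularity}, proved by comparison principles on the components of the continuation region) together with the exact perturbation identity (\ref{Equ:P}), which is the closing argument you sketch.
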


\begin{proof}
We prove by contradiction. Suppose $\hat{u}$ is an equilibrium stopping rule.
We first note that  $\mathcal{C}_{\hat{u}}\equiv \{x>0: \hat{u}(x)=0\}\neq(0,\infty)$; otherwise
$\hat{u}\equiv 0,$ leading to $J(x;\tau_{\hat{u}})=\int^{\infty}_0L(x;r)dF(r) = \int^{\infty}_0\frac{x}{r}dF(r),$ and hence $J(x;\tau_{\hat{u}})\to \infty$ as $ x\to\infty$  contradicting Lemma \ref{Lemma:PositiveValueFunction} in Appendix A.2.

Define $x_* = \inf\{x:x\in \bar{\mathcal{S}}_{\hat{u}}\}.$ It follows from Lemma \ref{Lemma:Non-existence} in Appendix A.2 that $x_*\in (0,\infty).$ 
A standard argument then leads to
\begin{align*}
J(x;\tau_{\hat{u}})=\int^{\infty}_0\left(K-\frac{x_*}{r}\right)
\left(\frac{x}{x_*}\right)^{\alpha(r)}dF(r)+\int^{\infty}_0\frac{x}{r}dF(r),\;\;x\in (0,x_*].
\end{align*}
Because  $J(x;\tau_{\hat{u}})\le K$ and $J(x_*;\tau_{\hat{u}})=K,$ we have $J_x(x_*-;\tau_{\hat{u}})\ge 0,$ i.e.,
\begin{align*}
\int^{\infty}_0\left(K-\frac{x_*}{r}\right)\alpha(r)\frac{1}{x_*}dF(r)+\int^{\infty}_0\frac{1}{r}dF(r)\ge 0,
\end{align*}
which in turn gives
\begin{align*}
x_*\le \frac{\int^{\infty}_0\alpha(r)dF(r)}{\int^{\infty}_0\frac{\alpha(r)-1}{r}dF(r)}K.\nonumber
\end{align*}
Combining with the failure of  condition (\ref{ImportantConditionNon}), we derive
\begin{align*}
x_* <\int^{\infty}_0rdF(r),
\end{align*}
which contradicts Lemma \ref{Lemma:Non-existence}. This completes the proof.
\end{proof}

The above is a stronger result. It suggests that for the problem to have  any equilibrium stopping rule at all (not necessarily the one obtainable by the SP principle), condition (\ref{ImportantConditionNon}) {\it must} hold. So, when it comes to a time-inconsistent stopping problem with non-exponential discounting, it is highly likely that no equilibrium stopping rule exists, even if the SP principle does generate a ``solution", or even if the time-consistent counterpart (in which everything else is identical except the discount function) is indeed solvable  by the SP. Applying these conclusions to the pseudo-exponential discount functions discussed above, we deduce that there is no equilibrium stopping when $\lambda$ is sufficiently large.

Having said this, a logical conclusion from Propositions \ref{ExplicitSolutionNon} and \ref{Prop:Nonexistence} is that when
equilibria do exist,  one  of them must be a solution generated by the SP.
In general if there exists an equilibrium then there may be multiple ones;
see,  for example, \cite{krusmi2003} and \cite{ekepir2008} for multiple equilibria in time-inconsistent control problems. In this case, the SP can only generate {\it one} of them, but not necessarily {\it all} of them. (This statement is true even for a  classical
time-consistent stopping problem.) So, after all, the SP   is still a useful, proper method for time-inconsistent
problems; we can simply apply it to generate a candidate solution. If the solution is an equilibrium (which we must verify), then we have found one (but not necessarily other equilibria); if
it is not an equilibrium, then we know there is no equilibrium at all.


\section{Conclusions}\label{Conclusion}
While the SP principle has been widely used to study time-inconsistent stopping problems, our results indicate the risk of using this principle on such problems. We have shown that the SP principle solves the time-inconsistent problem if and only if certain inequalities are satisfied.  \par
By a simple model of the classical real option problem, we have found that these inequalities may be violated even for  simple and commonly used non-exponential discount functions. When the SP principle fails, we have shown the intra-personal equilibrium does not exist. The nonexistence result and the failure of the SP principle suggest that it is imperative that the techniques for conventional optimal stopping problems be used more carefully when extended to solving time-inconsistent stopping problems.
\appendix
\renewcommand{\thesubsection}{\Alph{section}.\arabic{subsection}}
\section{Appendix: Proofs}
%

\subsection{Proof of Theorem \ref{EquilibriumcharacterizationNon}}
For the stopping time $\tau^{\epsilon,a}$, if $a = 1$, then $J( x; \tau^{\epsilon,a}) = g(x)$. The Bellman equation (\ref{StationaryPDENon}) implies that $g(x)\ge V(x)\equiv J(x;\tau_{\hat u})$.
This yields (\ref{StationaryDefinitionInequalityNon}).\par
If $a=0$, then
\begin{align}\label{Equ:P}
J(x;\tau^{\epsilon,a})&=\mathbb{E}\left[\int^{\epsilon}_0h(s)f(X_s)ds \Big\vert X_0=x\right]+\mathbb{E}\left[\int^{\tau^{\epsilon,a}}_{\epsilon}(h(s)-h(s-\epsilon))f(X_s)ds\Big\vert X_0=x\right]\nonumber\\&+\mathbb{E}[(h(\tau^{\epsilon,a})-h(\tau^{\epsilon,a}-\epsilon))
g(X_{\tau^{\epsilon,a}})\vert X_t=x]+\mathbb{E}[V(X_{\epsilon})\vert X_0=x]\nonumber\\&=\mathbb{E}\left[\int^{\epsilon}_0h(s)f(X_s)ds\Big\vert X_0=x \right]+\mathbb{E}\left[\int^{\tau^{\epsilon,a}}_{\epsilon}\int^{\infty}_0
e^{-r(s-\epsilon)}(e^{-\epsilon r}-1)dF(r)f(X_s)ds\Big\vert X_0=x\right]\nonumber\\&+\mathbb{E}\left[\int^{\infty}_0e^{-r(\tau^{\epsilon,a}-\epsilon)}
(e^{-\epsilon r}-1)dF(r)g(X_{\tau^{\epsilon,a}})\Big\vert X_0=x\right]+\mathbb{E}[V(X_{\epsilon})\vert X_0=x]\nonumber\\&=\mathbb{E}\left[\int^{\epsilon}_0h(s)f(X_s)ds\Big\vert X_0=x\right]+\mathbb{E}\left[\int^{\infty}_0(e^{-\epsilon r}-1)w(X_{\epsilon};r)dF(r)\Big\vert X_0=x\right]\nonumber\\&+\mathbb{E}[V(X_{\epsilon})\vert X_0=x].
\end{align}
Define $\tau_n = \inf\{s\ge 0: \sigma(X_s)V_x(X_s) > n   \}\wedge\epsilon$. Then it follows from Ito's formula that
\begin{align*}
\mathbb{E}\left[ V(X_{\tau_n})\vert X_0=x\right]= \mathbb{E}\left[\int^{\tau_n}_0(\frac{1}{2}\sigma^2(X_s)V_{xx}(X_s)+b(X_s)V_x(X_s))ds\Big\vert X_0=x\right]+V(x).
\end{align*}
By (\ref{StationaryPDENon}), we conclude
\begin{align*}
\mathbb{E}\left[ V(X_{\tau_n})\vert X_0=x\right]&= \mathbb{E}\left[\int^{\tau_n}_0(\frac{1}{2}\sigma^2(X_s)V_{xx}(X_s)+b(X_s)V_x(X_s))ds\Big\vert X_0=x\right]+V(x)\nonumber\\&\ge
\mathbb{E}\left[\int^{\tau_n}_0(-f(X_s)+\int^{\infty}_0rw(X_s;r)dF(r))ds\Big\vert X_0=x\right]+V(x).
\end{align*}
Note that conditions (\ref{GrowthPolynomial}) and (\ref{GrowthPolynoimalCostFunctional}) ensure that $-f(x)+\int^{\infty}_0rw(x;r)dF(r)$ has polynomial growth, i.e., there exist $C>0,m\ge 1$ such that
\begin{align*}
\Big\vert -f(x)+\int^{\infty}_0rw(x;r)dF(r)\Big\vert \le C(\vert x\vert^m+1),
\end{align*}
which leads to
\begin{align*}
\sup_{0\le t\le \epsilon}\Big\vert-f(X_s)+\int^{\infty}_0rw(X_s;r)dF(r)\Big\vert\le C(\sup_{0\le t\le \epsilon}\vert X_t \vert^m+1).
\end{align*}
Moreover, under condition (\ref{Lip}), it follows from standard SDE theory (see, for example, Chapter $1$ of \cite{yong1999stochastic}) that equation (\ref{SDE}) admits a unique strong solution $X$ satisfying
\begin{align*}
\mathbb{E}[\sup_{0\le t\le \epsilon}\vert X_t\vert^m\vert X_0=x]\le K_{\epsilon}(\vert x \vert^m + 1)
\end{align*}
with $K_{\epsilon}>0.$\par
Then letting $n\to \infty$, we conclude by the dominated convergence theorem that
\begin{align*}
\mathbb{E}\left[ V(X_{\epsilon})\vert X_0=x\right]&\ge
\mathbb{E}\left[\int^{\epsilon}_0(-f(X_s)+\int^{\infty}_0rw(X_s;r)dF(r))ds\Big\vert X_0=x\right]+V(x).
\end{align*}
Consequently,
\begin{align*}
&\liminf_{\epsilon\rightarrow 0+}\frac{J(x;\tau^{\epsilon,a})-J(x;\tau_{\hat{u}})}{\epsilon}\\&\ge\liminf_{\epsilon\rightarrow 0+} \mathbb{E}\left[\int^{\epsilon}_0h(s)f(X_s)ds\Big\vert X_0=x\right]+\mathbb{E}\left[\int^{\infty}_0(e^{-\epsilon r}-1)w(X_{\epsilon};r)dF(r)\Big\vert X_0=x\right]\nonumber\\&+\liminf_{\epsilon\rightarrow 0+}\frac{1}{\epsilon}\mathbb{E}\left[\int^{\epsilon}_0\int^{\infty}_0(rw(X_t;r)dF(r)-f(X_t))dt\Big\vert X_0=x \right].
\end{align*}
The continuity of $f$ and $w$ along with the  polynomial growth conditions (\ref{GrowthPolynomial}) and (\ref{GrowthPolynoimalCostFunctional}) allow the use of the dominated convergence theorem, which yields
\begin{align*}
\liminf_{\epsilon\rightarrow 0+}\frac{J(x;\tau^{\epsilon,a})-J(x;\tau_{\hat{u}})}{\epsilon}&\ge0.
\end{align*}
This completes the proof.
\subsection{Proof of Proposition \ref{Prop:Bechmark}}
Let $V^B$ be  the value function of the optimal stopping problem. It follows from the standard argument (see, for example, Chapter $6$ of \citeauthor{krylov2008controlled} \citeyear{krylov2008controlled}) that $V^B$ is continuously differentiable and its first-order derivative is absolutely continuous. Moreover, $V^B$ solves the following Bellman equation
\begin{equation}\label{BellmanEquationConvention}
\min\left\{\frac{1}{2}\sigma^2x^2V^B_{xx}(x)+bxV^B_x(x)+f(x)-rV^B(x),K-V^B(x)\right\}=0.
\end{equation}
Define the continuation region  $\mathcal{C}^B = \{x>0:V^B(x)<K\}$ and the stopping region $\mathcal{S}^B = \{x>0:V^B(x) = K\}$.

We claim that $\mathcal{S}^B \neq (0,\infty).$ If not, then $V^B\equiv K.$ Thus $\frac{1}{2}\sigma^2x^2V^B_{xx}(x)+bxV^B_x(x)+f(x)-rV^B(x)<0$ whenever $x\in\{x>0:f(x)-rK<0\}.$ However, since $f(0)<rK,$ the continuity of $f$ implies  $\{x>0:f(x)-rK<0\}\neq\emptyset.$ This contradicts the Bellman equation (\ref{BellmanEquationConvention}).\par
We now show that $\mathcal{C}^b \neq (0,\infty).$ If it is false, then  we have $V^B(x) = L(x;r)$, with $L$ defined by (\ref{SpecialSolution}). Since $f$ is increasing and bounded from below by $0$, we have
$$V^B(\infty)\equiv  \lim_{x\to\infty}V^B(x)= \int^{\infty}_0\int^{\infty}_0\lim_{x\to\infty}f(yx)e^{-rs}G(y,s)dyds.$$
The concavity of $f$ yields  $f(x)\ge xf_x(x)+f(0)$. It then follows from $\lim_{x\to\infty}xf_x(x) = \infty$ that $\lim_{x\to\infty}f(x) = \infty$, which yields that $V^B(\infty)=\infty$. This contradicts the fact that $V^B(x)\le K.$\par
Next, since $X$ is a geometric Brownian motion and $f$ is increasing, it is clear that $V$ is increasing too.
Now, we derive the value of the triggering boundary, $x_B$,  via the SP principle. Specifically, it follows from  (\ref{BellmanEquationConvention}) that
\begin{align*}
V^b(x) &= (K-L(x_B;r))(\frac{x}{x_B})^{\alpha(r)} + L(x;r),\;\;x<x_B\\
V^b(x) &= K, x\ge x_B,
\end{align*}
where $\alpha(r)$ is defined by (\ref{Root}).
Then the SP implies that $V^{B}_x(x_B)=0$ which after some calculations
yields that $x_B$ is the solution of the equation (\ref{SPEquation}).\par
To prove the unique existence  of the solution of (\ref{SPEquation}), define $Q(x) := \alpha(r)(K-L(x;r)) + L_x(x;r)x.$ Then $Q_x(x) = (-\alpha(r) +1)L_x(x;r) + L_{xx}(x;r)x$. As $L$ is  strictly increasing and concave and $\alpha(r)>1$, we deduce that $Q$ is strictly decreasing.  It remains  to show $Q(0)>0$ and $Q(\infty)<0$. It is easy to see that $Q(0) =\alpha(r)(K-L(0;r)) = \alpha(r)(K-\frac{f(0)}{r}) >0$ and $Q(x) = \alpha(r)(K-L(0;r)-\int^{x}_0L_x(s;r)ds)+L_x(x;r)x.$ Since $L$ is concave,  we have $\int^x_0L_x(s;r)ds\ge xL_x(x;r).$ Thus $Q(x)\le\alpha(r)(K-L(0;r)) +(-\alpha(r)+1)xL_x(x;r)$. Recalling that $\lim_{x\to\infty}xL_x(x;r) = \infty$ and $\alpha(r)>1$, we have $Q(\infty) = -\infty$. This completes the proof.

\subsection{Proof of Theorem \ref{Prop:SPGeneral}}
We need to  present a series of lemmas before giving a proof of Theorem \ref{Prop:SPGeneral}.
\begin{lemma}\label{Lem:Regularity}
Given a stopping rule $u$ and a discount rate $r>0,$ the function $E(x;\tau_u,r):=\mathbb{E}[\int^{\tau_u}_0e^{-rt}f(X_t)dt+e^{-r\tau_u}K\vert
X_0=x]$ is  continuous in $x\in(0,\infty)$.
\end{lemma}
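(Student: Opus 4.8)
The plan is to establish sequential continuity: fix $x\in(0,\infty)$ and an arbitrary sequence $x_n\to x$ in $(0,\infty)$, and show $E(x_n;\tau_u,r)\to E(x;\tau_u,r)$. I would exploit the geometric Brownian structure of $X$: writing $Z_t=\exp((b-\frac{1}{2}\sigma^2)t+\sigma W_t)$, one has $X_t^{y}=yZ_t$ for every starting point $y$, so the flow is an exact pathwise scaling. Consequently $X_\cdot^{x_n}\to X_\cdot^{x}$ uniformly on compacts, $\mathbb{P}$-a.s., and the stopping time attached to $x_n$ is $\tau_u^{x_n}=\inf\{t\ge0:u(x_nZ_t)=1\}$. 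Since both payoff pieces, the running integral $\int_0^{\tau}e^{-rt}f(X_t)\,dt$ and the terminal term $e^{-r\tau}K$, are continuous functionals of the pair (path, stopping time), the whole question reduces to (i) the a.s. convergence $\tau_u^{x_n}\to\tau_u^{x}$ and (ii) a domination permitting the interchange of limit and expectation.

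For (ii) I would use the scaling bound $X_t^{x_n}=x_nZ_t\le \bar x\,Z_t$ with $\bar x:=\sup_m x_m<\infty$, together with the polynomial-growth bound (\ref{GrowthPolynomial}) on $f$, which gives $|f(X_t^{x_n})|\le C(\bar x^m Z_t^m+1)$. This produces a single random variable $\int_0^{\infty}e^{-rt}C(\bar x^m Z_t^m+1)\,dt$ dominating $\int_0^{\infty}e^{-rt}|f(X_t^{x_n})|\,dt$ uniformly in $n$, integrable by the standing assumption (\ref{GrowthPolynoimalCostFunctional}); the terminal term is bounded by the constant $K$. Dominated convergence then reduces the lemma to (i).

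The crux is therefore (i). Because $X$ is non-degenerate, \cite{itoandh1965diffusion} (Chapter 3) guarantees that every point of the closed set $\bar{\mathcal{S}}_u$ is regular, so $\tau_u$ equals the first hitting time of $\bar{\mathcal{S}}_u$ and $E(\cdot;\tau_u,r)\equiv K$ there. In one dimension the continuation region $\mathcal{C}_u=\bar{\mathcal{S}}_u^{\,c}$ is open, hence a countable disjoint union of open intervals, and from any point of such an interval $\tau_u$ is simply the exit time of that interval. At an interior point of $\mathcal{C}_u$ the relevant endpoint is crossed rather than merely touched by $X$, so $\tau_u^{x_n}\to\tau_u^{x}$ a.s. follows from continuity of first-passage times in the starting point; the same holds trivially at interior points of $\bar{\mathcal{S}}_u$, where $E\equiv K$. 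The genuine obstacle is continuity at a boundary point $x\in\partial\mathcal{C}_u\subset\bar{\mathcal{S}}_u$ approached from inside the continuation region: there $\tau_u^{x}=0$ while $\tau_u^{x_n}>0$, and one must show $\tau_u^{x_n}\to0$ a.s. This is exactly the regularity of $x$ for $\bar{\mathcal{S}}_u$, which the non-degeneracy of $X$ and the one-dimensional theory of \cite{itoandh1965diffusion} again supply. Combining this boundary regularity with the crossing property at interior points yields the a.s. convergence in (i), and (ii) then delivers $E(x_n;\tau_u,r)\to E(x;\tau_u,r)$, completing the argument.
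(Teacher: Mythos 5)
Your route is genuinely different from the paper's and, in outline, viable. The paper argues analytically: it fixes $x_0$ and proves one-sided continuity by cases according to whether a one-sided neighbourhood of $x_0$ lies in $\mathcal{S}_u$, in $\mathcal{C}_u$, or in neither. In the first two cases continuity is immediate (constant value, respectively Feynman--Kac plus interior regularity of the ODE $\frac{1}{2}\sigma^2x^2E_{xx}+bxE_x-rE+f=0$); in the delicate third case, where $x_0$ is an accumulation point of infinitely many components $(a_n,b_n)$ of $\mathcal{C}_u$, it squeezes $I=E-K$ between $0$ and an auxiliary solution $H$ of (\ref{AuxiliaryFunction}) with zero boundary data via the comparison principle, using the sign of $f-rK$ near $x_0$. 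You instead work pathwise: the multiplicative coupling $X^y_t=yZ_t$, almost sure convergence of $\tau_u^{x_n}$ to $\tau_u^{x}$, and dominated convergence. Your approach dispenses with the sign dichotomy on $f-rK$ and with any PDE comparison, and it isolates the real difficulty (the hitting time of a possibly very irregular set $\mathcal{S}_u$) in probabilistic terms; the paper's approach never has to discuss convergence of stopping times at all, at the price of the auxiliary-function construction.

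Two of your steps need more care than you give them. First, the crux (i) at a boundary point is not ``exactly the regularity of $x$ for $\bar{\mathcal{S}}_u$'': regularity concerns paths started \emph{at} $x$, whereas you need paths started at $x_n\neq x$. The claim does follow, but only via the one-dimensional fact --- which your coupling makes available and which you should state --- that the range of $X^{x_n}$ on $[0,t]$ is the interval $[x_n\min_{[0,t]}Z,\;x_n\max_{[0,t]}Z]$ with $\min_{[0,t]}Z<1<\max_{[0,t]}Z$ a.s.; for $n$ large this interval contains a point of $\mathcal{S}_u$ lying strictly between $x\min_{[0,t]}Z$ and $x\max_{[0,t]}Z$, whence $\tau_u^{x_n}\le t$. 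A similar crossing-not-touching argument is needed for convergence of the exit times of a component $(a,b)$, and a word is required on the event $\{\tau_u^{x}=\infty\}$. Second, your dominating variable $\int_0^\infty e^{-rt}C(\bar x^mZ_t^m+1)\,dt$ is not integrable merely by virtue of (\ref{GrowthPolynomial})--(\ref{GrowthPolynoimalCostFunctional}), which control $\mathbb{E}\int e^{-rs}\vert f(X_s)\vert ds$ rather than the $m$-th moment of $X$; in the setting where Lemma \ref{Lem:Regularity} is actually invoked ($f$ increasing with linear growth and $b<r$) the domination is fine, but as written this step is an unjustified leap.
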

\begin{proof}
We prove the right continuity of $E(\cdot;\tau_u,r)$ at a given $x_0>0$; the left  continuity can be discussed in the same way.\par

If there exists $\delta >0$ such that $(x_0,x_0+\delta)\in \mathcal{S}_u$, then the right continuity of $E(\cdot;\tau_u,r)$ at $x_0$ is obtained immediately.
If there  exists $\delta >0$ such that $(x_0,x_0+\delta)\in \mathcal{C}_u$, then it follows from the Feynman-Kac formula that $E(\cdot;\tau_u,r)$ is the solution to the differential equation $\frac{1}{2}\sigma^2x^2E_{xx}+bxE_x-rE+f=0$ on $(x_0,x_0+\delta)$. This in particular implies that $E(\cdot;\tau_u,r)\in C^2((x_0,x_0+\delta))\cap C([x_0,x_0+\delta])$ due to the regularity of $f$ and the coefficients of the differential equations; hence the right continuity of $E(\cdot;\tau_u,r)$ at $x_0$.

Otherwise, we first assume that $f(x_0)\ge rK$ and consider the set $\mathcal{C}_u\cap(x_0,\infty).$ Since it is an open set, we have $\mathcal{C}_u\cap(x_0,\infty)=\cup_{n\ge 1}(a_n,b_n),$ where $a_n,b_n \in \bar{\mathcal{S}_u}, \forall n\ge1$. It is then easy to see that $x_0$ is an accumulation point of $\{a_n\}_{n\ge 1}$ and hence $x_0\in\bar{\mathcal{S}_u}$.\par
Define $I(x):=E(x;\tau_u,r)-K$ for $x\in (a_n,b_n).$ It is easy to see that $I$ solves the following differential equation \begin{align}\label{AuxiliaryFunction}
\frac{1}{2}\sigma^2x^2I_{xx}(x)+bxI_x(x)-rI(x)+f(x)-rK=0.
\end{align}
with the boundary conditions
\begin{align*}
I(a_n)=I(b_n) = 0.
\end{align*}
Consider an auxiliary function $H$ that solves the following differential equation
\begin{align*}
\frac{1}{2}\sigma^2x^2H_{xx}(x)+bxH_x(x)-rH(x)+f(x)-rK=0,
\end{align*}
with the boundary conditions
\begin{align*}
H(x_0)=H(b_1) = 0.
\end{align*}
Since $f(x)>rK$ on $(x_0,\infty),$ the comparison principle shows that $ H(x)\ge 0, \forall x\in [x_0,b_1].$ Applying the comparison principle again on any $(a_n,b_n)\cap(x_0,b_1), \forall n\in\mathbb{N}^+,$ we have  $0\le I(x)\le H(x).$ Noting that  $H(x)\rightarrow H(x_0)=0$ as $x\rightarrow x_0+$, we conclude that $I(\cdot)$ is right continuous at $x_0$ and so is  $E(\cdot;\tau_u,r)$.\par
For the case $f(x_0)< rK,$ a similar argument applies. Indeed, consider an auxiliary function $H_1$  satisfying the differential equation (\ref{AuxiliaryFunction}) on $(x_0,f^{-1}(rK))$ with the boundary condition $H_1(x_0)=H_1(f^{-1}(rK))=0.$ The comparison principle yields that $H_1(x)\le I(x)\le 0$ on $(a_n,b_n)\cap(x_0,f^{-1}(rK)), \forall n\in\mathbb{N}^+$. The right continuity of $I(\cdot)$ and $E(\cdot;\tau_u,r)$ then follows immediately.
\end{proof}
\begin{lemma}\label{Lemma:PositiveValueFunction}
If $\hat{u}$ is an equilibrium stopping rule, then $J(x;\tau_{\hat{u}})\le K\; \forall x\in(0,\infty).$
\end{lemma}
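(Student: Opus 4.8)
The plan is to extract the bound directly from the equilibrium condition in Definition~\ref{Def:EquilibriumStopping}, specialized to the perturbation index $a=1$. The key observation is that for $a=1$ the perturbed stopping time $\tau^{\epsilon,1}$ equals $0$ \emph{for every} $\epsilon>0$, independently of $\epsilon$; this is simply the deviation ``stop immediately.'' Since an equilibrium stopping rule must by definition out-perform (weakly) every such perturbation, this single admissible deviation will already pin down the inequality.

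First I would evaluate $J(x;\tau^{\epsilon,1})=J(x;0)$. Plugging $\tau=0$ into the cost functional, the running-cost integral over $[0,0]$ vanishes, and using $h(0)=1$ together with the constant terminal payoff $K$ yields $J(x;0)=K$. Consequently, the defining inequality of an equilibrium, evaluated at $a=1$, becomes
\[
\limsup_{\epsilon\to 0+}\frac{J(x;\tau_{\hat{u}})-K}{\epsilon}\le 0,\qquad \forall x\in(0,\infty).
\]

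Finally I would argue on the sign of the (now $\epsilon$-independent) numerator. If $J(x;\tau_{\hat{u}})>K$ at some $x$, then the fraction equals a fixed positive constant divided by $\epsilon$, whose $\limsup$ as $\epsilon\to 0+$ is $+\infty$, contradicting the inequality above. Hence $J(x;\tau_{\hat{u}})-K\le 0$ at every $x$, which is exactly the claim. There is essentially no substantive obstacle: the lemma is an immediate corollary of the equilibrium definition through the ``stop now'' deviation, and the only point requiring care is the correct evaluation $J(x;0)=K$ via $h(0)=1$ and the constant final cost.
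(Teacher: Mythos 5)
Your proposal is correct and is essentially identical to the paper's own proof: both use the $a=1$ deviation, note that $\tau^{\epsilon,1}=0$ gives $J(x;\tau^{\epsilon,1})=h(0)K=K$, and derive a contradiction because $J(x;\tau_{\hat u})>K$ would force the $\limsup$ in the equilibrium condition to be $+\infty$.
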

\begin{proof}
If there exists $x_0\in(0,\infty)$ such that $J(x_0;\tau_{\hat{u}})>K,$ then we have
\begin{align*}
\limsup_{\epsilon\rightarrow 0}\frac{J(x_0;\tau_{\hat{u}})-J(x_0;\tau^{\epsilon,1})}{\epsilon}=\infty,
\end{align*}
where $\hat{u}^{\epsilon,1}$ is given by (\ref{uAuxiliaryStationaryNon}).
This contradicts the definition of an equilibrium stopping rule.
\end{proof}
\begin{lemma}\label{Lemma:Non-existence} If $\hat{u}$ is an equilibrium stopping rule, then $\{x>0: f(x)<\int^{\infty}_0rdF(r)K\}\subset \mathcal{C}_{\hat{u}}.$
\end{lemma}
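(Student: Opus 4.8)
The plan is to argue by contradiction against the equilibrium inequality, tested at a point of the (closed) stopping region with the ``delay'' perturbation $a=0$. Suppose $\hat{u}$ is an equilibrium and that some $x_0>0$ satisfies $f(x_0)<K\int_0^\infty r\,dF(r)$ yet $x_0\notin\mathcal{C}_{\hat{u}}$, i.e. $x_0\in\bar{\mathcal{S}}_{\hat{u}}$. Since $X$ is non-degenerate, starting from such an $x_0$ we have $\tau_{\hat{u}}=0$ almost surely, so that $J(x_0;\tau_{\hat{u}})=K$ and, with $w$ and $V$ as defined in Theorem \ref{EquilibriumcharacterizationNon}, $w(x_0;r)=K$ for every $r$ and $V(x_0)=\int_0^\infty w(x_0;r)\,dF(r)=K$.

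The engine of the argument is the decomposition \eqref{Equ:P}, which was obtained from the strong Markov property and the weighted form of $h$ alone and therefore holds for an arbitrary stopping rule, not only for one coming from the Bellman system: with $g\equiv K$,
\begin{align*}
J(x_0;\tau^{\epsilon,0})=\mathbb{E}\left[\int_0^\epsilon h(s)f(X_s)\,ds\right]+\mathbb{E}\left[\int_0^\infty(e^{-\epsilon r}-1)w(X_\epsilon;r)\,dF(r)\right]+\mathbb{E}\left[V(X_\epsilon)\right].
\end{align*}
Subtracting $K=V(x_0)$, dividing by $\epsilon$, and letting $\epsilon\to0+$ termwise, the first term contributes $f(x_0)$ (because $h(0)=1$ and $f,X$ are continuous) and the second contributes $-K\int_0^\infty r\,dF(r)$ (using $\tfrac{e^{-\epsilon r}-1}{\epsilon}\to-r$, the continuity of $w$ from Lemma \ref{Lem:Regularity} together with $X_\epsilon\to x_0$, the value $w(x_0;r)=K$, and the finiteness of $\int_0^\infty r\,dF(r)$ to dominate the interchange of the limit, the $r$-integral and the expectation).

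The decisive step is to handle the third term \emph{without} differentiating $V$: for a general equilibrium we have no control on the second-order regularity of $V$ at $x_0$, so I would instead use only a sign bound. By Lemma \ref{Lemma:PositiveValueFunction} we have $V\le K$ everywhere while $V(x_0)=K$, so $x_0$ is a global maximum of $V$ and hence $\mathbb{E}[V(X_\epsilon)]-V(x_0)\le0$ for every $\epsilon$. Combining the three pieces gives
\begin{align*}
\limsup_{\epsilon\to0+}\frac{J(x_0;\tau^{\epsilon,0})-K}{\epsilon}\le f(x_0)-K\int_0^\infty r\,dF(r)<0 .
\end{align*}
Consequently $\frac{J(x_0;\tau^{\epsilon,0})-K}{\epsilon}$ is bounded above by a strictly negative constant for all small $\epsilon$, so that $\limsup_{\epsilon\to0+}\frac{J(x_0;\tau_{\hat{u}})-J(x_0;\tau^{\epsilon,0})}{\epsilon}>0$, contradicting the equilibrium condition at $x_0$ with $a=0$; hence no such $x_0$ can exist, which is the claimed inclusion. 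I expect the only genuine difficulty to be technical — justifying the two limit interchanges through the growth hypotheses \eqref{GrowthPolynomial} and \eqref{GrowthPolynoimalCostFunctional} and the continuity of $w$ — while the conceptual point, namely replacing the computation of the generator of $V$ by the maximum principle $V\le K=V(x_0)$, is precisely what lets the argument run for equilibria carrying no a priori smoothness.
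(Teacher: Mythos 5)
Your proposal is correct and follows essentially the same route as the paper: both start from the decomposition (\ref{Equ:P}) at a point of $\bar{\mathcal{S}}_{\hat u}$, bound the term $\mathbb{E}[V(X_\epsilon)]$ by $K$ via Lemma \ref{Lemma:PositiveValueFunction} rather than differentiating $V$, and pass to the limit using the continuity of $w$ from Lemma \ref{Lem:Regularity} to contradict the equilibrium inequality with $a=0$.
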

\begin{proof}
Suppose that there exists $x\in \{x>0: f(x)<\int^{\infty}_0rdF(r)K\}\cap\bar{\mathcal{S}}_{\hat{u}},$ then it follows from Lemma \ref{Lemma:PositiveValueFunction} that $\mathbb{E}[J(X_t;\tau_{\hat{u}})\vert X_0=x]\le K.$ Consider the stopping time $\tau^{\epsilon,0}.$ Equation  (\ref{Equ:P}) and the fact that $J(x;\tau_{\hat{u}})=K$ give
\begin{align*}
\frac{J(x;\tau^{\epsilon,0})-J(x;\tau_{\hat{u}})}{\epsilon}&\le\frac{1}{\epsilon}\mathbb{E}\left[\int^{\epsilon}_0h(s)f(X_s) ds \Big\vert X_0=x\right]+\mathbb{E}\left[\int^{\infty}_0\left(\frac{e^{-\epsilon r}-1}{\epsilon}\right)w(X_{\epsilon};r)dF(r)\Big\vert X_0=x\right].
\end{align*}
As $w(\cdot,r)$ is continuous (Lemma \ref{Lem:Regularity}) and $w(x;r)=K$, we have
\begin{align*}
\liminf_{\epsilon\rightarrow 0}\frac{J(x;\tau^{\epsilon,0})-J(x;\tau_{\hat{u}})}{\epsilon}&\le  f(x)-\int^{\infty}_0rKdF(r)<0.
\end{align*}
This contradicts the definition of an equilibrium stopping rule.
\end{proof}\par
We now turn to the proof of Theorem \ref{Prop:SPGeneral}. We begin  with the sufficiency. To this end it suffices to show that $V(x)\le K, x\in(0,x_*)$ and $f(x)-\int^{\infty}_0rdF(r)K\ge 0, x\in(x_*,\infty).$ \par
We first show that $V_{xx}\le 0, x\in(0,x_*).$ By simple algebra, we have
\begin{align*}
V_{xx}(x) = \int^{\infty}_0\alpha(r)(\alpha(r) - 1)(K-L(x;r))(\frac{x}{x_*})^{\alpha(r)}\frac{1}{x^2}dF(r) + \int^{\infty}_0L_{xx}(x;r)dF(r).
\end{align*}
As $L$ is concave, we only need to prove $\int^{\infty}_0\alpha(r)(\alpha(r) - 1)(K-L(x;r))(\frac{x}{x_*})^{\alpha(r)}dF(r)\le 0.$ It is easy to see that $(\frac{x}{x_*})^{\alpha(r)}$ is decreasing in $r$ given that $\alpha(r)$ is increasing in $r$ and $x<x_*.$ Then the rearrangement inequality (e.g., Chapter $10$ of \citeauthor{hardy1952inequalities} \citeyear{hardy1952inequalities}; \citeauthor{lehmann1966some} \citeyear{lehmann1966some}) yields\footnote{Inequality (\ref{InequalityRearrangement}) can be read as \begin{align*}
\text{cov}(X,Y) \le 0,
\end{align*} with $X = \alpha(R)(\alpha(R) - 1)(K-L(x;R))(\frac{x}{x_*})^{\alpha(R)}$ and $Y=(\frac{x}{x_*})^{\alpha(R)}$, where $R$ is a random variable with distribution function $F$. Because of the monotonicity of $X,Y$ in $R$, $X$ and $Y$ are anti-comonotonic. Then inequality (\ref{InequalityRearrangement}) follows from the fact that the covirance of two anti-comonotonic random variables is non-positive. }\begin{align}\label{InequalityRearrangement}
&\int^{\infty}_0\alpha(r)(\alpha(r) - 1)(K-L(x;r))(\frac{x}{x_*})^{\alpha(r)}dF(r)\nonumber\\&\le \int^{\infty}_0\alpha(r)(\alpha(r) - 1)(K-L(x;r))dF(r)\int^{\infty}_0(\frac{x}{x_*})^{\alpha(r)}dF(r).
\end{align}
Therefore it follows from (\ref{InequalityGeneral1}) that $V_{xx}(x)\le0,\; x\in(0,x_*).$ Now, $V_x(x_*) = 0.$  Thus $V_x(x) \ge 0$ and consequently $V(x)\le K$
$\forall x\in (0,x_*)$, due to  $V(x_*) = K.$\par
Next, the inequality $f(x)-\int^{\infty}_0rdF(r)K\;\forall  x\in(x_*,\infty)$ follows from $f$ being increasing along with inequality (\ref{InequalityGeneral2}). This completes the proof of the sufficiency.\par
We now turn to the necessity part. As (\ref{InequalityGeneral2}) is an immediate corollary of Lemma \ref{Lemma:Non-existence}, we only need to prove (\ref{InequalityGeneral1}). Suppose (\ref{InequalityGeneral1}) does not hold. Then by a simple calculation, we have  $$V_{xx}(x_*-) =  \int^{\infty}_0\alpha(r)(\alpha(r) - 1)(K-L(x_*;r))\frac{1}{x_*^2}dF(r)+\int^{\infty}_0L_{xx}(x_*;r)dF(r)>0.$$ However, $V_x(x_*) = 0$,  implying  that there exists  $x_1\in(0,x_*)$ such that $V_x(x)<0$ on $x\in(x_1,x_*).$ Then it follows from $V(x_*) =K$ that $V(x)>K$ when $x\in(x_1,x_*),$ which contradicts Lemma \ref{Lemma:PositiveValueFunction}.

\bibliography{EbertReferences}        

\begin{thebibliography}{30}
\newcommand{\enquote}[1]{``#1''}
\expandafter\ifx\csname natexlab\endcsname\relax\def\natexlab#1{#1}\fi

\bibitem[\protect\citeauthoryear{Bayraktar, Zhang, and Zhou}{Bayraktar
  et~al.}{2018}]{bayraktar2018time}
\textsc{Bayraktar, E., J.~Zhang, and Z.~Zhou} (2018): \enquote{Time consistent
  stopping for the mean-standard deviation problem---The discrete time case,}
  \emph{Available at SSRN 3128866}.

\bibitem[\protect\citeauthoryear{Bernstein}{Bernstein}{1928}]{ber1928}
\textsc{Bernstein, S.} (1928): \enquote{Sur les fonctions absolument
  monotones,} \emph{Acta Mathematica}, 52, 1--66.

\bibitem[\protect\citeauthoryear{Bjork and Murgoci}{Bjork and
  Murgoci}{2010}]{bjork2010general}
\textsc{Bjork, T. and A.~Murgoci} (2010): \enquote{A general theory of
  Markovian time inconsistent stochastic control problems,} \emph{Available at
  SSRN 1694759}.

\bibitem[\protect\citeauthoryear{Bj\"{o}rk, Murgoci, and Zhou}{Bj\"{o}rk
  et~al.}{2014}]{bjomurzho2014}
\textsc{Bj\"{o}rk, T., A.~Murgoci, and X.~Zhou} (2014): \enquote{Mean-variance
  portfolio optimization with state dependent risk aversion,}
  \emph{Mathematical Finance}, 24, 1--24.

\bibitem[\protect\citeauthoryear{Christensen and Lindensjö}{Christensen and
  Lindensjö}{2018b}]{christensen2018finding}
\textsc{Christensen, S. and K.~Lindensjö} (2018b): \enquote{On finding
  equilibrium stopping times for time-inconsistent Markovian problems,}
  \emph{SIAM Journal on Control and Optimization}, 56, 4228--4255.

\bibitem[\protect\citeauthoryear{Christensen and Lindensj{\"o}}{Christensen and
  Lindensj{\"o}}{2018a}]{christensen2018time}
\textsc{Christensen, S. and K.~Lindensj{\"o}} (2018a): \enquote{On
  time-inconsistent stopping problems and mixed strategy stopping times,}
  \emph{Available at arXiv 1804.07018}.

\bibitem[\protect\citeauthoryear{Dixit}{Dixit}{1993}]{dixit1993art}
\textsc{Dixit, A.~K.} (1993): \emph{The art of smooth pasting}, vol.~2,
  Routledge, London and New York.

\bibitem[\protect\citeauthoryear{Ebert, Wei, and Zhou}{Ebert
  et~al.}{2016}]{ebert2016weighted}
\textsc{Ebert, S., W.~Wei, and X.~Zhou} (2016): \enquote{Weighted
  discounting--On group diversity, time-inconsistency, and consequences for
  investment,} \emph{Available at SSRN 2840240}.

\bibitem[\protect\citeauthoryear{Ekeland and Lazrak}{Ekeland and
  Lazrak}{2006}]{ekelaz2006}
\textsc{Ekeland, I. and A.~Lazrak} (2006): \enquote{Being serious about
  non-commitment: Subgame perfect equilibrium in continuous time,}
  \emph{Available at arXiv 0604264}.

\bibitem[\protect\citeauthoryear{Ekeland, Mbodji, and Pirvu}{Ekeland
  et~al.}{2012}]{ekeland2012time}
\textsc{Ekeland, I., O.~Mbodji, and T.~A. Pirvu} (2012):
  \enquote{Time-consistent portfolio management,} \emph{SIAM Journal on
  Financial Mathematics}, 3, 1--32.

\bibitem[\protect\citeauthoryear{Ekeland and Pirvu}{Ekeland and
  Pirvu}{2008}]{ekepir2008}
\textsc{Ekeland, I. and T.~Pirvu} (2008): \enquote{Investment and consumption
  without commitment,} \emph{Mathematics and Financial Economics}, 2, 57--86.

\bibitem[\protect\citeauthoryear{Grenadier and Wang}{Grenadier and
  Wang}{2007}]{grewan2007}
\textsc{Grenadier, S. and N.~Wang} (2007): \enquote{Investment under
  uncertainty and time inconsistent preferences,} \emph{Journal of Financial
  Economics}, 84, 2--39.

\bibitem[\protect\citeauthoryear{Hardy, Littlewood, and P{\'o}lya}{Hardy
  et~al.}{1952}]{hardy1952inequalities}
\textsc{Hardy, G.~H., J.~E. Littlewood, and G.~P{\'o}lya} (1952):
  \emph{Inequalities}, Cambridge university press, Cambridge.

\bibitem[\protect\citeauthoryear{Harris and Laibson}{Harris and
  Laibson}{2013}]{harlai2013}
\textsc{Harris, C. and D.~Laibson} (2013): \enquote{Instantaneous
  gratification,} \emph{Quarterly Journal of Economics}, 128, 205--248.

\bibitem[\protect\citeauthoryear{Hsiaw}{Hsiaw}{2013}]{hsiaw2013goal}
\textsc{Hsiaw, A.} (2013): \enquote{Goal-setting and self-control,}
  \emph{Journal of Economic Theory}, 148, 601--626.

\bibitem[\protect\citeauthoryear{Huang and Nguyen-Huu}{Huang and
  Nguyen-Huu}{2018}]{huang2018time}
\textsc{Huang, Y.-J. and A.~Nguyen-Huu} (2018): \enquote{Time-consistent
  stopping under decreasing impatience,} \emph{Finance and Stochastics}, 22,
  69--95.

\bibitem[\protect\citeauthoryear{Ito and McKean~Jr}{Ito and
  McKean~Jr}{1965}]{itoandh1965diffusion}
\textsc{Ito, K. and P.~McKean~Jr} (1965): \emph{Diffusion processes and their
  sample paths}, Springer, Berlin Heidelberg.

\bibitem[\protect\citeauthoryear{Karp}{Karp}{2007}]{kar2007}
\textsc{Karp, L.} (2007): \enquote{Non-constant discounting in continuous
  time,} \emph{Journal of Economic Theory}, 132, 557--568.

\bibitem[\protect\citeauthoryear{Krussell and Smith}{Krussell and
  Smith}{2003}]{krusmi2003}
\textsc{Krussell, P. and A.~Smith} (2003): \enquote{Consumption-savings
  decision with quasi-geometric discounting,} \emph{Econometrica}, 71,
  365--375.

\bibitem[\protect\citeauthoryear{Krylov}{Krylov}{2008}]{krylov2008controlled}
\textsc{Krylov, N.~V.} (2008): \emph{Controlled diffusion processes}, vol.~14,
  Springer, Berlin Heidelberg.

\bibitem[\protect\citeauthoryear{Laibson}{Laibson}{1997}]{lai1997}
\textsc{Laibson, D.} (1997): \enquote{Golden eggs and hyperbolic discounting,}
  \emph{Quarterly Journal of Economics}, 112, 443--378.

\bibitem[\protect\citeauthoryear{Lehmann et~al.}{Lehmann
  et~al.}{1966}]{lehmann1966some}
\textsc{Lehmann, E.~L. et~al.} (1966): \enquote{Some concepts of dependence,}
  \emph{The Annals of Mathematical Statistics}, 37, 1137--1153.

\bibitem[\protect\citeauthoryear{Loewenstein and Prelec}{Loewenstein and
  Prelec}{1992}]{loewenstein1992anomalies}
\textsc{Loewenstein, G. and D.~Prelec} (1992): \enquote{Anomalies in
  intertemporal choice: Evidence and an interpretation,} \emph{The Quarterly
  Journal of Economics}, 573--597.

\bibitem[\protect\citeauthoryear{Luttmer and Mariotti}{Luttmer and
  Mariotti}{2003}]{lutmar2003}
\textsc{Luttmer, E. and T.~Mariotti} (2003): \enquote{Subjective discounting in
  an exchange economy,} \emph{Journal of Political Economy}, 11, 959--989.

\bibitem[\protect\citeauthoryear{O'Donoghue and Rabin}{O'Donoghue and
  Rabin}{2001}]{odorab1999}
\textsc{O'Donoghue, T. and M.~Rabin} (2001): \enquote{Choice and
  procrastination,} \emph{Quarterly Journal of Economics}, 116, 112--160.

\bibitem[\protect\citeauthoryear{Peskir and Shiryaev}{Peskir and
  Shiryaev}{2006}]{peskir2006optimal}
\textsc{Peskir, G. and A.~Shiryaev} (2006): \emph{Optimal stopping and
  free-boundary problems}, Birkh{\"a}user, Basel.

\bibitem[\protect\citeauthoryear{Phelps and Pollak}{Phelps and
  Pollak}{1968}]{phepol1968}
\textsc{Phelps, E. and R.~Pollak} (1968): \enquote{On second-best national
  saving and game-equilibrium growth,} \emph{Review of Economic Studies}, 35,
  185--199.

\bibitem[\protect\citeauthoryear{Samuelson}{Samuelson}{1937}]{sam1937}
\textsc{Samuelson, P.} (1937): \enquote{A note on measurement of utility,}
  \emph{Review of Economic Studies}, 4, 155--161.

\bibitem[\protect\citeauthoryear{Strotz}{Strotz}{1955}]{str1955}
\textsc{Strotz, R.} (1955): \enquote{Myopia and inconsistency in dynamic
  utility maximization,} \emph{Review of Economic Studies}, 23, 165--180.

\bibitem[\protect\citeauthoryear{Yong and Zhou}{Yong and
  Zhou}{1999}]{yong1999stochastic}
\textsc{Yong, J. and X.~Y. Zhou} (1999): \emph{Stochastic controls: Hamiltonian
  systems and HJB equations}, vol.~43, Springer, New York.

\end{thebibliography}
\bibliographystyle{ecta}  

\end{document}